\tikzset{
    int/.style={
           rectangle,
           rounded corners,
           draw=black, thin, fill=black!20,
           minimum height=2em,
           inner sep=2pt,
           text centered,
           },
}
\newtheorem{theorem}{Theorem}
\newtheorem{lemma}{Lemma}
\newtheorem{claim}{Claim}
\newtheorem{proposition}{Proposition}
\newtheorem{corollary}{Corollary}
\newtheorem{remark}{Remark}
\newcommand{\Mod}[1]{\ (\mathrm{mod}\ #1)}
\begin{document}
\pdfoutput=1
\allowdisplaybreaks
\newlength\figureheight
\newlength\figurewidth

\title{ List Decoding of Deletions Using \\ Guess \& Check Codes \vspace{-0.2cm} \thanks{This work was supported  in parts by NSF Grant CCF 15-26875.}}
\author{
\IEEEauthorblockN{Serge Kas Hanna,  Salim El Rouayheb \\ ECE Department, Rutgers University\\ serge.k.hanna@rutgers.edu, salim.elrouayheb@rutgers.edu
\vspace{-0.1cm}
}
}
\maketitle
\begin{abstract} 
Guess \& Check (GC) codes are systematic binary codes that can correct multiple deletions, with high probability. GC codes have logarithmic redundancy in the length of the message $k$, and the encoding and decoding algorithms of these codes are deterministic and run in polynomial time for a constant number of deletions $\delta$. The {\em unique} decoding properties of GC codes were examined in a previous work by the authors. In this paper, we investigate the {\em list decoding} performance of these codes. Namely, we study the {\em average} size and the {\em maximum} size of the list obtained by a GC decoder for a constant number of deletions~$\delta$. The theoretical results show that:~(i)~the average size of the list approaches $1$ as $k$ grows; and~(ii)~there exists an infinite sequence of GC codes indexed by $k$, whose maximum list size in upper bounded by a constant that is independent of $k$. We also provide numerical simulations on the list decoding performance of GC codes for multiple values of~$k$ and~$\delta$.
\end{abstract}

\section{Introduction}
\label{sec:1}
\label{sec:1}
Codes that correct deletions have several applications such as file synchronization and DNA-based storage. In remote file synchronization for instance, the goal is to synchronize two edited versions of the same file with minimum number of communicated bits. In general, the file edits are a result of a series of deletions and insertions. One way to achieve synchronization is by using systematic codes that can efficiently correct deletions~\cite{GC,V15}.

The study of deletion correcting codes goes back to the 1960s~\cite{G61,VT65,L66}. In 1965, Varshamov and Tenengolts constructed the VT codes for correcting asymmetric errors over the Z-channel~\cite{VT65}. In 1966, Levenshtein showed that VT codes are capable of correcting a single deletion with zero-error~\cite{L66}. Also in~\cite{L66}, Levenshtein derived fundamental bounds on the redundancy needed to correct $\delta$ deletions. The bounds showed that the number of redundant bits needed to correct $\delta$ deletions in a codeword of length $n$ bits is logarithmic in $n$, namely $c\delta \log n$~bits for some constant $c>0$. The fundamental bounds on the redundancy were later generalized and improved by Cullina and Kiyavash~\cite{N14}. The redundancy of VT codes is asymptotically optimal for correcting a single deletion. Finding VT-like codes for multiple deletions has been an open problem for several decades. The literature on multiple deletion correcting codes has mostly focused on constructing codes that can decode multiple deletions with zero-error~\cite{H02,B16,Gab17,S99,G14}. There has also been multiple recent works which study codes that can correct multiple deletions with low probability of error~\cite{GC, Ab17, T16, Ob}.  

In this work, we are interested in the problem of designing efficient codes for list decoding of deletions. A list decoder returns a list of candidate strings which is guaranteed to contain the actual codeword. The idea of list decoding was first introduced in the 1960s by Elias~\cite{E57} and Wozencraft~\cite{W58}. The main goal when studying list decoders is to find explicit codes that can return a small list of candidate strings in polynomial time\footnote{Polynomial time in terms of the length of the codeword $n$.}. The size of the list gives a lower bound on the time complexity of the list decoder. For instance, if the size of the list is superpolynomial, then polynomial time list decoding cannot be achieved. List decoding has been studied for various classes of error correcting codes, such as Reed-Solomon codes~\cite{RSlist}, Reed-Muller codes~\cite{RMlist}, and Polar codes~\cite{Plist}. However, the problem of finding list decoders for deletions has not received much attention in the literature. In~\cite{G14}, Guruswami and Wang proved the {\em existence} of codes that can list-decode a constant fraction of deletions given by $n(\frac{1}{2}-\epsilon)$, where $0<\epsilon <\frac{1}{2}$ and $n$ is the length of the codeword. In the regime considered in~\cite{G14}, the codes have low rate of the order of~$\epsilon^3$. Recently in~\cite{A17}, Wachter-Zeh derived an upper bound on the list size for decoding deletions and insertions. An explicit  list decoding algorithm that is based on VT codes was also proposed in~\cite{A17}.

In this paper, we focus on the case where the codeword $\mathbf{x}$ is affected by a {\em constant} number of deletions $\delta$, resulting in a string $\mathbf{y}$ of length $n-\delta$. A simple list decoder in this case is one that returns all binary strings that have a Levenshtein distance $\delta$ from $\mathbf{y}$, i.e., all supersequences of $\mathbf{y}$ of length $n$. This list decoder does not require any redundancy, and its resulting list size is exactly $\sum_{i=0}^{\delta} \binom{n}{i}$ \cite{Orlitsky}. Hence, for a constant number of deletions $\delta$, the list size is $\mathcal{O}(n^{\delta})$ (polynomial function of $n$ of degree $\delta$). In~\cite{A17}, this list size was reduced to $\mathcal{O}(n^{\delta-1})$  by using VT codes. The idea in~\cite{A17} is to first generate all binary strings that have a Levenshtein distance $\delta-1$ from~$\mathbf{y}$, and then decode these strings using VT codes. Note that this reduction in the list size comes at the expense of adding a logarithmic redundancy that is introduced by VT codes. Our main contribution in this paper is showing that the Guess \& Check~(GC) codes which we presented in~\cite{GC}, can achieve a list size that is upper bounded by constant, i.e., $\mathcal{O}(1)$, with logarithmic redundancy and in polynomial time.

GC codes are explicit systematic codes that have logarithmic redundancy and can correct a constant number of deletions $\delta$ in polynomial time. Initially, the unique decoding performance of GC codes was studied in~\cite{GC}. In the unique decoding setting, a decoding failure is declared if the decoding results in a list that contains more than one candidate string. The study in~\cite{GC} showed that the probability of decoding failure of GC codes vanishes asymptotically in the length of the message $k$ (or equivalently the length of the codeword $n$). In this work, we quantify the value of the list size obtained by the GC decoder by studying the {\em average} and the {\em maximum} size of the list. Through our theoretical and simulation results, we show that for a constant number of deletions $\delta$, GC codes can return a small list of candidate strings in polynomial time. Namely, our contributions are the following.

{\em Theoretical results:} Our theoretical results show that:~(i)~the average size of the list approaches~$1$~asymptotically in~$k$ for a uniform iid message and any deletion pattern; and~(ii)~there exists an infinite sequence of GC codes indexed by $k$ whose maximum list size is upper bounded by a constant that is independent of~$k$, for any message and any deletion pattern. These results demonstrate that in the average case, lists of size strictly greater than one occur with low probability. Furthermore, in the worst case, the list size is very small~(constant), and hence the performance of GC codes is very close to codes that can uniquely decode multiple deletions with zero-error.

{\em Simulation results:} We provide numerical simulations on the list decoding performance of GC codes for values of $k$ up to $1024$ bits, and values of $\delta$ up to $3$ deletions. The average list size recorded in these simulations is very close to 1 (less than 2). Whereas, the maximum list size detected within the performed simulations is $3$. 

{\em Comparison to~\cite{A17}:} Our theoretical results improve on the list decoder presented in~\cite{A17}, whose maximum list size is theoretically $\mathcal{O}(n^{\delta-1})$, i.e., upper bounded by a function that grows polynomially in length of the codeword $n$ for a constant number of deletions~$\delta$. Furthermore, in Section~\ref{sec:5}, we provide a numerical comparison to~\cite{A17} which shows that the maximum list size of list decoder in~\cite{A17} grows with $n$ and is much larger than that of GC codes.

\begin{figure*}
\centering
\resizebox{1\textwidth}{!}{
\begin{tikzpicture}[node distance=2.5cm,auto,>=latex']
\draw (-0.6,0) -- (-0.45,0);
    \node [int] (c) [text width=0.8cm,align=center]{\scriptsize Binary to $q-$ary};
    \node (b) [left of=c,node distance=1.5cm, coordinate] {a};
    \node [int] (z) [right of=c, node distance=3.4cm,text width=2.75cm,align=center] {\scriptsize Systematic MDS $\left(\left \lceil k/\ell \right \rceil+c,\left \lceil k/\ell \right \rceil \right)$};
    \node [int] (y) [right of=z, node distance=3.8cm,text width=0.8cm,align=center] {\scriptsize $q-$ary to binary};
    \node [int] (y1) [right of=y, node distance=3cm, text width=1.5cm,align=center] {\scriptsize {$(\delta+1)$ repetition of parity bits}};
    \node [coordinate] (end) [right of=c, node distance=2cm]{};
    \path[->] (b) edge node {\scriptsize $\mathbf{u}$} node[below] {\scriptsize $k$ bits} (-0.6,0);
    \draw[->] (c) edge node {\scriptsize $\mathbf{U}$} (z) ;
     \node(h1) [right of=c,node distance=1.2cm] [below] {\scriptsize $\left \lceil k/\ell \right \rceil$};
    \node(h2) [below of=h1,node distance=0.3cm] {\scriptsize symbols};
    \node(q1) [below of=c,node distance=0.85cm] {\scriptsize $q=2^{\ell}$};
    \node(q2) [below of=y,node distance=0.85cm] {\scriptsize $q=2^{\ell}$};
    \path[->] (z) edge node {\scriptsize $\mathbf{X}$} (y); 
    \node(i1) [right of=z,node distance=2.37cm] [below] {\scriptsize $\left \lceil k/\ell \right \rceil+c$};
   \node(name)[above of=i1,node distance=1.25cm] [above] {\scriptsize Guess \& Check (GC) codes};
    \node(i2) [below of=i1,node distance=0.3cm] {\scriptsize symbols};
    \node(e1) [right of=y1,node distance=3.15cm] {};
    \path[->] (y1) edge node {\scriptsize $\mathbf{x}$} (e1);
    \path[->] (y) edge node {} (y1);
    \node(t1) [right of=y,node distance=1.33cm] [below] {\scriptsize $k+c\ell$};
    \node(t2) [below of=t1,node distance=0.3cm] {\scriptsize bits};
    \node(tt1) [right of=y1,node distance=1.9cm] [below] {\scriptsize $k+c(\delta +1)\ell$};
    \node(tt2) [below of=tt1,node distance=0.3cm] {\scriptsize bits}; 
    \node(b1) [above of=c,node distance=0.88cm] {\scriptsize Block I};
    \node(b2) [above of=z,node distance=0.665cm] {\scriptsize Block II};
    \node(b3) [above of=y,node distance=0.88cm] {\scriptsize Block III};
    \node(b4) [above of=y1,node distance=0.88cm] {\scriptsize Block IV};
    
    \draw[dashed] (-0.6,-1.1) rectangle (11.1,1.05);
\end{tikzpicture} }
\captionsetup{font=footnotesize}
\caption{Encoding block diagram of Guess \& Check (GC) codes for $\delta$ deletions. Block I: The binary message of length $k$ bits is chunked into adjacent blocks of length $\ell$ bits each, and each block is mapped to its corresponding symbol in $GF(q)$ where $q=2^{\ell}$. Block II: The resulting string is coded using a systematic $\left(\left \lceil k/\ell \right \rceil+c,\left \lceil k/\ell \right \rceil \right)$ $q-$ary MDS code where $c>\delta$ is the number of parity symbols. Block~III:~The symbols in $GF(q)$ are mapped to their binary representations. Block IV: Only the parity bits are coded using a $(\delta+1)$ repetition code.}
\label{fig:1}
\vspace{-0.5cm}
\end{figure*}

\section{Preliminaries} 
\label{sec:2}
In this section, we present an overview of Guess \& Check (GC) codes~\cite{GC} and state some of the previous results on these codes which will be helpful in subsequent sections of this paper. We also introduce the necessary notations used throughout the paper.
\subsection{Guess \& Check (GC) Codes for List Decoding}
GC codes were presented in~\cite{GC} as explicit binary codes that can correct multiple deletions, with high probability. These codes can also be used for the list decoding of deletions as we describe in the encoding and decoding steps below.

Let $\mathbf{u}\in \mathbb{F}_2^{k}$ be a message of length $k$~bits. Let $\mathbf{x}\in \mathbb{F}_2^n$ be its corresponding codeword of length $n$ bits. Let $\delta$ be a constant representing the number of deletions. Consider a deletion pattern $\mathbf{d}=(d_1,d_2,\ldots,d_{\delta})$ representing the positions of $\delta$ deletions, where $d_i\in \{1,2,\ldots,n\}$, for \mbox{$i=1,\ldots,\delta$}. 

\begin{enumerate}[leftmargin=*]
\item {\em Encoding:} The encoding block diagram is illustrated in~Fig.~\ref{fig:1}. Encoding is done based on the following steps. (i)~The message $\mathbf{u}$ of length $k$ bits is chunked into $\left \lceil k/\ell \right \rceil$ adjacent blocks of length $\ell$ bits each, and each block is then mapped to its corresponding symbol in $GF(q)$, where $q=2^{\ell}$. (ii)~The resulting $q-$ary string is encoded using a systematic $\left(\left \lceil k/\ell \right \rceil+c,\left \lceil k/\ell \right \rceil \right)$ MDS code where $c>\delta$ is a constant representing the number of parity symbols. (iii)~The $q-$ary symbols are mapped backed to their binary representations. (iv)~Only the parity bits are encoded using a $(\delta+1)$ repetition code. This encoding procedure results in the codeword $\mathbf{x}$ of length $n$ bits, where $n=k+c(\delta+1)\ell$.
\item {\em Decoding:} Suppose that the codeword $\mathbf{x}$ is affected by a deletion pattern $\mathbf{d}=(d_1,d_2,\ldots,d_{\delta})$, resulting in a string $\mathbf{y}$ of length $n-\delta$ that is received by the decoder. The received string $\mathbf{y}$ is decoded based on the following steps. (i)~The decoder recovers the parity bits which are protected by a $(\delta+1)$ repetition code. (ii)~The decoder makes $t$ guesses, where each guess corresponds to a specific distribution of the $\delta$ deletions among the $\left \lceil k/\ell \right \rceil$ blocks. The total number of guesses is equal to the total number of possible deletion patterns given by
\begin{equation}
\label{eqt}
t=\binom{\left \lceil k/\ell \right \rceil + \delta -1}{\delta} =\mathcal{O} \left( \frac{k^{\delta}}{\ell^{\delta}} \right).
\end{equation} 
(iii)~For each guess, the decoder specifies the block boundaries based on its assumption on the locations of the $\delta$ deletions. Then, it treats the blocks that are affected by deletions as hypothetical erasures and decodes these erasures using the first $\delta$ MDS parity symbols. This guessing phase results in an {\em initial} list of at most\footnote{Depending on the runs of bits within the received string $\mathbf{y}$, different guesses may lead to the same decoded string.}~$t$ decoded strings. (iv)~The decoder then checks whether each decoded string in the initial list is a valid guess or not, and removes the invalid ones. A guess is considered is to be valid if the decoded string is consistent with the remaining $c-\delta$ parities; and its Levenshtein distance from the received string $\mathbf{y}$ is exactly~$\delta$. At the end of this checking phase, the GC decoder is left with a smaller {\em final} list of candidate strings.
\end{enumerate}
\begin{proposition}
The final list returned by the GC decoder is guaranteed to contain the actual codeword $\mathbf{x}$, and all the strings in this list have a Levenshtein distance $\delta$ from the received string $\mathbf{y}$.
\end{proposition}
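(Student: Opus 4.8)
The statement splits into two parts: (a) the true codeword $\mathbf{x}$ is always present in the final list, and (b) every string that survives to the final list lies at Levenshtein distance exactly $\delta$ from $\mathbf{y}$. Part~(b) is immediate from the definition of the checking phase, since the decoder retains a decoded string only if its Levenshtein distance from $\mathbf{y}$ equals $\delta$; hence by construction no surviving string can have a different distance. The substance of the proof is therefore part~(a), and my plan is to follow the fate of the single guess whose assumed placement of deletions among the blocks agrees with the true deletion pattern $\mathbf{d}$.

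To establish part~(a) I would proceed as follows. First, because the parity bits are encoded with a $(\delta+1)$ repetition code and at most $\delta$ deletions occur, step~(i) of the decoder recovers the $c$ parity symbols without error; here I would invoke the corresponding recovery guarantee from~\cite{GC}. Second, the actual pattern $\mathbf{d}$ determines how the deletions fall among the $\lceil k/\ell\rceil$ blocks, and this placement is one of the $t$ patterns enumerated in~\eqref{eqt}; thus the ``correct'' guess is among those the decoder tries. Under this guess the block boundaries are placed consistently with $\mathbf{d}$, so every block free of deletions is read off $\mathbf{y}$ as its exact $GF(q)$ symbol, while the blocks hit by deletions are flagged as erasures. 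Since $\delta$ deletions touch at most $\delta$ blocks, at most $\delta$ symbols are erased, and the MDS property of the $(\lceil k/\ell\rceil+c,\lceil k/\ell\rceil)$ code guarantees that any such set of at most $\delta$ erasures is correctable from the first $\delta$ parity symbols. The correct guess therefore reconstructs the systematic symbols exactly, i.e.\ it outputs $\mathbf{u}$ and hence $\mathbf{x}$. Finally, $\mathbf{x}$ clears the validity test trivially: it is consistent with all $c$ parities, so in particular with the remaining $c-\delta$, and since $\mathbf{y}$ is obtained from $\mathbf{x}$ by $\delta$ deletions with $|\mathbf{x}|-|\mathbf{y}|=\delta$, their Levenshtein distance is exactly $\delta$. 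Hence $\mathbf{x}$ is valid and appears in the final list.

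I expect the main obstacle to be the erasure-decoding step in the middle: making rigorous that, under the correct guess, the undeleted blocks are decoded to their exact symbols and that no more than $\delta$ symbols are ever marked as erasures, so that the $\delta$ available MDS parities always suffice for exact reconstruction. Once this is secured, the remaining pieces---part~(b) and the final validity check on $\mathbf{x}$---reduce to elementary length-counting and the defining property of MDS codes, and require no further work.
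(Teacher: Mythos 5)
Your proposal is correct and follows essentially the same route as the paper's proof: the correct deletion pattern is among the $t$ guesses, the parities are recovered exactly via the $(\delta+1)$ repetition code, the correct guess reconstructs $\mathbf{x}$ (the paper leaves the MDS erasure-decoding step implicit where you spell it out), and the distance-$\delta$ property follows directly from the checking phase. No substantive difference.
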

\begin{proof}
The decoder goes over all possible deletion patterns, so the actual deletion pattern is guaranteed to be considered in one of the $t$ guesses. Furthermore, the parities are recovered with zero-error since they are protected by a $(\delta+1)$ repetition code. Therefore, the decoding of $\mathbf{y}$ for the correct guess will result in the actual codeword~$\mathbf{x}$. Also, the fact that all the strings in this list have a Levenshtein distance $\delta$ from $\mathbf{y}$, follows directly from part (iv) of the decoding steps.
\end{proof}

Since the size of the initial list is at most $t$, and $t$ is upper bounded by a polynomial function of $k$ given by~\eqref{eqt}, then the initial list size is at most {\em polynomial} in $k$. In this paper, we are interested in studying the size of the final list obtained by the GC decoder. The list size is a deterministic function of the codeword $\mathbf{x}$ and the deletion pattern~$\mathbf{d}$, and hence can be represented by $L(\mathbf{x},\mathbf{d})$. Since the codeword~$\mathbf{x}$ is a deterministic function of the message $\mathbf{u}$, an equivalent definition of the list size is $L(\mathbf{u},\mathbf{d})$. Throughout the paper, we drop the $\mathbf{d}$ argument and use $L(\mathbf{u})$
to refer to the maximum list size over all possible deletion patterns for a message $\mathbf{u}$ of length $k$ bits, i.e., 
\begin{equation}
\label{def1}
L(\mathbf{u})\triangleq \max_{\mathbf{d}}L(\mathbf{u},\mathbf{d}).
\end{equation}
Based on the decoding steps of GC codes, we have \mbox{$L(\mathbf{u})\in \{1, \ldots, t\}$}. To quantify the size of the final list we define the following quantities:
\begin{enumerate}[leftmargin=*]
\item The {\em average} value of the list size, defined by
\begin{equation}
\label{def2}
L_{av}\triangleq \mathbb{E}(L(\mathbf{u})) = \sum_{l=1}^{t} l \cdot Pr(L(\mathbf{u})=l),
\end{equation}
for a uniform iid message $\mathbf{u}$ of length $k$ bits.
\item The {\em maximum} value of the list size, defined by
\begin{equation}
\label{lmax}
L_{max}\triangleq \max_{\mathbf{u}}L(\mathbf{u}),
\end{equation}
for any message $\mathbf{u}$ of length $k$ bits.
\end{enumerate} 
The definition of $L(\mathbf{u})$ in~\eqref{def1} implies that the average $L_{av}$ defined in~\eqref{def2}, and the maximum $L_{max}$ defined in~\eqref{lmax}, are maximized over all possible deletion patterns. 

\subsection{Previous Results on Unique Decoding}\
GC codes are systematic codes, and it follows from the encoding block diagram in Fig.~\ref{fig:1} that their redundancy is $n-k=c(\delta+1)\ell$. The encoding and decoding algorithms of GC codes are deterministic and run in polynomial time for a constant number of deletions $\delta$~\cite{GC}. In the unique decoding setting, successful decoding is declared if only one guess is valid. If two or more guesses are valid, the decoder declares a decoding failure. The results in~\cite{GC} show that an upper bound on the probability of decoding failure for a uniform iid message~$\mathbf{u}$, and any deletion pattern $\mathbf{d}$, is given by
\begin{equation}
Pr(F)=\mathcal{O}\left(\frac{k^{\delta}}{\ell^{\delta}2^{\ell(c-\delta)}}\right). \label{eqprf}
\end{equation}
If $\ell=\Omega(\log k)$ and $c\geq 2\delta$, then this probability of decoding failure goes to zero as $k$ goes to infinity.

\subsection{Notations}
We summarize the notations used in this paper in Table~I.
\begin{table}[h]
\centering
\setlength\extrarowheight{1.6pt}
 \begin{tabular}{|c|l|c|l|}
\hline
Variable & Description & Variable & Description \\ \hline
$\mathbf{u}$ & message & $c$ & number of MDS parity symbols \\ \hline
$k$ & length of the message in bits & $\ell$ & chunking length used for encoding \\ \hline
$\mathbf{x}$ & codeword & $q$ & field size given by $2^{\ell}$ \\ \hline
$n$ & length of codeword in bits & $t$ & total number of guesses given in~\eqref{eqt} \\ \hline
$\mathbf{d}$ & deletion pattern & $l$ & realization of the list size \\ \hline
$\delta$ & number of deletions & $L_{av}$ & average list size defined in~\eqref{def2} \\ \hline
$d_i$ & position of the $i^{th}$ deletion & $L_{max}$ & maximum list size defined in~\eqref{lmax} \\ \hline
\end{tabular}
\captionsetup{font=footnotesize}
\caption{\footnotesize{Summary of the notations used in the paper.}}
\vspace{-0.3cm}
\label{t}
\end{table}

\section{Main Results}
\label{sec:3}
In this section, we present our main results in this paper. The proofs of these results are given in Section~\ref{sec:4}. Recall that the number of deletions $\delta$ and the number of parity symbols $c$ are constants, i.e., independent of $k$. 
\subsection{Results}
Theorem~\ref{thm:1} gives an upper bound on the average list size $L_{av}$ defined in~\eqref{def2}, in terms of the GC code parameters summarized in Table~\ref{t}.
\begin{theorem}[Average list size]
\label{thm:1} 
For a uniform iid message of length $k$ bits, and any deletion pattern $(d_1,d_2,\ldots,d_{\delta})$, the average list size $L_{av}$ obtained by the Guess \& Check (GC) decoder satisfies
\begin{equation*}
1\leq L_{av} \leq 1 +  \mathcal{O}\left(\frac{(k/\ell)^{2\delta}}{2^{\ell(c-\delta)}}\right).
\end{equation*}
\end{theorem}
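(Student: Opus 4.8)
The lower bound is immediate: by Proposition~1 the final list always contains the transmitted codeword $\mathbf{x}$, so $L(\mathbf{u},\mathbf{d})\geq 1$ for every $\mathbf{u}$ and every $\mathbf{d}$, whence $L(\mathbf{u})\geq 1$ and $L_{av}=\mathbb{E}(L(\mathbf{u}))\geq 1$. The whole content of the theorem is therefore in controlling the \emph{excess} $L(\mathbf{u})-1\geq 0$, which counts the number of surviving \emph{incorrect} guesses in the worst-case list. My plan is to write $L_{av}=1+\mathbb{E}(L(\mathbf{u})-1)$ and to bound the second term by linearity of expectation. Since $L(\mathbf{u})-1=\max_{\mathbf{d}}\big(L(\mathbf{u},\mathbf{d})-1\big)$ and each summand is non-negative, I would first replace the maximum over deletion patterns by a sum, and then for each fixed $\mathbf{d}$ replace $L(\mathbf{u},\mathbf{d})-1$ by the sum over the incorrect guesses $g\neq g^{*}(\mathbf{d})$ of the indicator that $g$ survives the checking phase. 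This reduces the whole problem to bounding, for a single incorrect guess, the probability that it is declared valid.

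The crux is the per-guess estimate $Pr(g\text{ valid})=\mathcal{O}\!\left(2^{-\ell(c-\delta)}\right)$, which is exactly the mechanism underlying the unique-decoding bound~\eqref{eqprf}. I would argue it as follows. For an incorrect guess the decoder imposes a wrong set of block boundaries on $\mathbf{y}$, treats the assumed-deleted blocks as erasures, and fills them in from the first $\delta$ MDS parities; the resulting string is a codeword $\mathbf{x}'\neq\mathbf{x}$ of the systematic MDS code. For $\mathbf{x}'$ to be declared valid it must additionally agree with the $c-\delta$ parities that were \emph{not} consumed by the erasure-filling. Because $\mathbf{u}$ is uniform iid and the code is MDS (so that any $c-\delta$ parity symbols are $GF(q)$-linearly independent of the symbols fixed by the guess), these unused parities act as $c-\delta$ independent checks over $GF(q)$, each passed with probability $q^{-1}=2^{-\ell}$; hence $\mathbf{x}'$ survives with probability at most $q^{-(c-\delta)}=2^{-\ell(c-\delta)}$. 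Summing over the $t=\mathcal{O}\big((k/\ell)^{\delta}\big)$ guesses of a fixed $\mathbf{d}$ gives $\mathbb{E}\big(L(\mathbf{u},\mathbf{d})-1\big)=\mathcal{O}\big((k/\ell)^{\delta}2^{-\ell(c-\delta)}\big)$, matching~\eqref{eqprf}.

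It remains to account for the maximization over deletion patterns, which supplies the second factor of $(k/\ell)^{\delta}$ in the stated bound. The naive union bound over all $\binom{n}{\delta}$ position patterns would cost a factor $\mathcal{O}(k^{\delta})$ and overshoot the target by $\ell^{\delta}$; the main obstacle is therefore to show that it suffices to range over the $\mathcal{O}\big((k/\ell)^{\delta}\big)$ distinct \emph{block-level} deletion configurations, i.e.\ that the surviving-guess count is governed by how the $\delta$ deletions are distributed among the $\lceil k/\ell\rceil$ blocks rather than by their exact positions, so that the within-block positions do not inflate the estimate. Granting this reduction, multiplying the per-configuration bound $\mathcal{O}\big((k/\ell)^{\delta}2^{-\ell(c-\delta)}\big)$ by the $\mathcal{O}\big((k/\ell)^{\delta}\big)$ configurations yields $\mathbb{E}(L(\mathbf{u})-1)=\mathcal{O}\big((k/\ell)^{2\delta}2^{-\ell(c-\delta)}\big)$ and hence the claimed upper bound on $L_{av}$. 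I expect the non-degeneracy of the $c-\delta$ unused parities (uniformly over incorrect guesses) together with this block-level reduction to be the two delicate points; the remaining manipulations are routine.
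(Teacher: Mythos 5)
Your lower bound and your first two paragraphs are sound, but they take a genuinely different route from the paper, and your third paragraph contains the one real gap. The paper's proof is much shorter: it writes $L_{av}=\sum_{l\ge 1}Pr(L\ge l)=1+\sum_{l=2}^{t}Pr(L\ge l)\le 1+t\,Pr(L\ge 2)$, identifies $Pr(L\ge 2)$ with the unique-decoding failure probability $Pr(F)$ of~\eqref{eqprf}, and multiplies by $t=\mathcal{O}\bigl((k/\ell)^{\delta}\bigr)$ from~\eqref{eqt}. In particular, the second factor of $(k/\ell)^{\delta}$ in the theorem comes from the trivial cap $L\le t$ applied to the tail sum, \emph{not} from a union over deletion patterns; the deletion pattern is held fixed throughout, which is how ``any deletion pattern'' is meant, consistently with how~\eqref{eqprf} itself is stated. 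Your per-pattern argument --- linearity of expectation over the incorrect guesses, each surviving the $c-\delta$ unused parity checks with probability $\mathcal{O}\bigl(2^{-\ell(c-\delta)}\bigr)$ --- is essentially a re-derivation of~\eqref{eqprf}, and, read for a fixed $\mathbf{d}$, it actually yields the sharper bound $L_{av}\le 1+\mathcal{O}\bigl((k/\ell)^{\delta}2^{-\ell(c-\delta)}\bigr)$, saving a factor of $t$ over the paper's tail-sum estimate.

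The gap is the block-level reduction you ``grant'' in the last paragraph. To bound $\mathbb{E}\bigl[\max_{\mathbf{d}}(L(\mathbf{u},\mathbf{d})-1)\bigr]$ by a sum of expectations you must sum over all $\binom{n}{\delta}=\mathcal{O}(k^{\delta})$ position-level patterns: two patterns hitting the same blocks at different within-block positions produce different received strings $\mathbf{y}$, hence different decoded candidates and different survival events, so $L(\mathbf{u},\mathbf{d})$ is a genuinely different random variable for each of them and the patterns cannot be collapsed into $\mathcal{O}\bigl((k/\ell)^{\delta}\bigr)$ representatives. Without that reduction your bound degrades to $1+\mathcal{O}\bigl(k^{\delta}(k/\ell)^{\delta}2^{-\ell(c-\delta)}\bigr)$, off by $\ell^{\delta}$ from the statement (harmless for Corollary~\ref{c:1}, but not the theorem as written). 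The fix is to drop the union over $\mathbf{d}$ altogether and prove the claim for an arbitrary fixed deletion pattern, which is what the paper's proof implicitly does.
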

The next result follows from Theorem~\ref{thm:1} and shows that for appropriate choices of the GC code parameters, $L_{av}$ approaches one asymptotically in the length of the message~$k$.
\begin{corollary}
\label{c:1} 
For choices of the GC code parameters that satisfy $\ell=\Omega(\log k)$ and \mbox{$c\geq 3\delta$}, the average list size satisifies 
\begin{equation*}
\lim_{k\to +\infty}L_{av}=1.
\end{equation*}
\end{corollary}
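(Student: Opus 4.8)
The plan is to obtain Corollary~\ref{c:1} directly from Theorem~\ref{thm:1} by a squeeze argument, so that the only substantive work is showing that the error term in the upper bound vanishes under the stated parameter constraints. Writing $E_k \triangleq \frac{(k/\ell)^{2\delta}}{2^{\ell(c-\delta)}}$, Theorem~\ref{thm:1} gives $1 \le L_{av} \le 1 + \mathcal{O}(E_k)$. Since the lower bound is the constant $1$ and $\delta$ and $c$ are fixed, it suffices to prove that $E_k \to 0$ as $k \to \infty$; the conclusion $\lim_{k\to\infty} L_{av} = 1$ then follows from the sandwich theorem.

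To bound $E_k$, I would instantiate the hypothesis $\ell = \Omega(\log k)$ in its relevant regime, namely $\ell \ge \log_2 k$ for all sufficiently large $k$. This single inequality controls both parts of $E_k$. In the denominator it yields $2^{\ell(c-\delta)} \ge 2^{(c-\delta)\log_2 k} = k^{c-\delta}$, using $c > \delta$. In the numerator, the same lower bound on $\ell$ gives $(k/\ell)^{2\delta} \le (k/\log_2 k)^{2\delta}$. Combining the two estimates produces
\[
E_k \;\le\; \frac{(k/\log_2 k)^{2\delta}}{k^{\,c-\delta}} \;=\; \frac{k^{\,3\delta - c}}{(\log_2 k)^{2\delta}}.
\]
With $c \ge 3\delta$ the exponent $3\delta - c$ is nonpositive, so the polynomial factor never grows.

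The step that deserves care, and the one I expect to be the only delicate point, is the boundary case $c = 3\delta$, where the power of $k$ degenerates to $k^{0}=1$ and the polynomial part contributes nothing to the decay. Here the crucial observation is that I must not discard the factor $(\log_2 k)^{2\delta}$ in the denominator via the crude bound $\ell \ge 1$; retaining it gives $E_k \le 1/(\log_2 k)^{2\delta} \to 0$, so the logarithmic factor alone drives the term to zero. When $c > 3\delta$ the exponent of $k$ is strictly negative and $E_k \to 0$ for the simpler reason that a negative power of $k$ overwhelms any polylogarithmic factor. Hence $E_k \to 0$ in all cases permitted by the hypotheses, and the squeeze argument completes the proof.
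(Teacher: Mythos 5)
Your proposal is correct and follows essentially the same route as the paper: both substitute the lower bound $\ell \gtrsim \log k$ into the error term from Theorem~\ref{thm:1} to obtain a bound of the form $k^{3\delta-c}/(\log k)^{2\delta}$ and then invoke $c\ge 3\delta$, with your treatment merely spelling out the boundary case $c=3\delta$ that the paper leaves implicit in its $\mathcal{O}\bigl(1/(k^{c-3\delta}(\log k)^{2\delta})\bigr)$ expression. No substantive difference.
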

The next theorem presents an upper bound on the maximum list size~$L_{max}$ defined in~\eqref{lmax}, which is the main quantity of interest in the list decoding literature. Informally, the theorem states that for appropriate choices of the code parameters, the maximum list size $L_{max}$ is upper bounded by a constant,
\begin{theorem}[Maximum list size]
\label{thm:2} 
Let $\ell=\Omega(\log k)$ and \mbox{$c\geq 2\delta$}. Consider a sufficiently large message length $k_1$. There exists an infinite sequence of Guess \& Check (GC) codes indexed by the message lengths \mbox{$k_1<k_2<k_3<\ldots$}, whose maximum list size $L_{max}$, for any message of length $k\in \{k_1,k_2,k_3,\ldots\}$, and any deletion pattern $(d_1,\ldots,d_{\delta})$, is upper bounded by a constant that is independent of~$k$.
\end{theorem}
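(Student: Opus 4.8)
The plan is to convert the worst-case quantity $L_{max}$ of~\eqref{lmax} into a static packing count and then to isolate an infinite set of message lengths along which that count provably stays constant, all while keeping $\ell=\Theta(\log k)$ so that the redundancy $c(\delta+1)\ell$ remains logarithmic. By the Proposition, together with the exhaustiveness of the guessing phase and the fact that the $(\delta+1)$-repetition code recovers the parity symbols with zero error, every member of the final list is a GC codeword that is a length-$n$ supersequence of the received word $\mathbf y$, lies at Levenshtein distance exactly $\delta$ from $\mathbf y$, and carries the \emph{same} $c$ parity symbols. Hence $L(\mathbf u,\mathbf d)$ equals the number of GC codewords simultaneously satisfying these constraints, and $L_{max}$ is the maximum of this count over all codewords and all deletion patterns. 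I would isolate this as a short lemma, since it replaces the algorithmic list by a combinatorial object suited to counting.

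I would then fix the parameters as in Corollary~\ref{c:1}: $\ell=\Theta(\log k)$ with a sufficiently large leading constant and $c\ge 2\delta$, so that Theorem~\ref{thm:1} yields $L_{av}-1=\mathcal O\!\left((k/\ell)^{2\delta}/2^{\ell(c-\delta)}\right)\to 0$. This already shows that \emph{almost every} message has list size $1$, but it says nothing about the single worst message; the entire content of Theorem~\ref{thm:2} is this worst case, and the gap between average and maximum is where the difficulty concentrates.

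For the worst case I would argue structurally rather than by averaging. The first ingredient is a clustering observation: any two codewords that appear in the same list both contain the common subsequence $\mathbf y$ of length $n-\delta$, hence lie within Levenshtein distance $2\delta$ of one another, so the whole list sits inside a single radius-$2\delta$ Levenshtein ball. The second ingredient is the algebra of the code: all list members share the parity symbols pinned by the repetition code, while the MDS minimum distance $c+1\ge 2\delta+1$ forces any two distinct equal-parity codewords to differ in at least $2\delta+1$ of the $GF(q)$ symbols. The aim is to combine these two facts—few symbol disagreements are permitted by closeness, yet many are forced by the MDS distance—into a constant bound $C$ on the number of mutually confusable codewords that can coexist, and then to exhibit the infinite sequence $k_1<k_2<\cdots$ of lengths for which this bound is attained, giving $L_{max}\le C$ independent of $k$.

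The genuine obstacle is that $\delta$ deletions shift the chunk boundaries, so edit-closeness in bits does \emph{not} translate into closeness in $GF(q)$ symbols: a confusable codeword may re-synchronize only after a long shifted region and thus disagree with $\mathbf x$ in unboundedly many symbols, defeating a naive application of the MDS distance. Reconciling this with the packing bound is exactly why the statement asserts only an infinite subsequence rather than all $k$: I expect that the shifted candidates fail the parity constraints precisely for lengths $k_i$ at which $\lceil k_i/\ell\rceil$ and the field size interact favourably, and pinning down those lengths is the crux. It is worth stressing why one cannot instead push the first-moment bound of Theorem~\ref{thm:1} up to the maximum: since $L_{av}-1=2^{-k}\sum_{\mathbf u}(L(\mathbf u)-1)$, driving this sum below $1$ (which would force $L_{max}=1$) would require $2^{\ell(c-\delta)}>2^{k}\,\mathrm{poly}(k)$, i.e. $\ell$ linear in $k$ and hence non-logarithmic redundancy; the worst case therefore demands the structural argument above rather than a refinement of the average.
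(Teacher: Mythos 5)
Your proposal has a genuine gap at its central step, and you in fact name it yourself without resolving it. The whole argument is supposed to rest on combining (a) every two list members lie within Levenshtein distance $2\delta$ of one another, with (b) distinct codewords sharing the same $c$ parity symbols must differ in at least $c+1\geq 2\delta+1$ of the $GF(q)$ symbols, so as to extract a constant bound $C$ on the list size. But these two facts never collide: because a single deletion shifts every subsequent bit, two strings at Levenshtein distance $2\delta$ can disagree in $\Theta(k/\ell)$ of the $GF(q)$ symbols, so for large $k$ both (a) and (b) are simultaneously satisfiable by arbitrarily many candidates and no packing contradiction arises. You acknowledge exactly this obstacle (``a confusable codeword may re-synchronize only after a long shifted region'') and then defer its resolution to a choice of lengths $k_i$ at which $\lceil k_i/\ell\rceil$ and the field size ``interact favourably,'' but no such arithmetic mechanism is exhibited and no bound $C$ is ever derived; the proof never closes. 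Moreover, that speculation misidentifies where the subsequence restriction comes from.

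The paper's proof takes an entirely different, probabilistic route in which the constant comes for free and the work lies in propagating it. Fix a sufficiently large $k_1$ and set $l^{\star}=L_{max}(k_1)+1$, so that trivially $Pr\left(L(k_1)\geq l^{\star}\right)=0$. Lemma~\ref{lemma1} (which only uses that $Pr(L(k)=l)=\mathcal{O}(f(k))$ with $f(k)=k^{\delta}/\ell^{\delta}2^{\ell(c-\delta)}$ strictly decreasing and vanishing when $\ell=\Omega(\log k)$ and $c\geq 2\delta$) yields an infinite sequence $k_1<k_2<\cdots$ along which $Pr(L(k)=l^{\star})$ is non-increasing, hence identically zero; since each message has probability $2^{-k}>0$, zero probability means \emph{no} message of that length attains list size $l^{\star}$, which converts the average-case bound into a worst-case statement. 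Lemma~\ref{lemma2}, proved by a counting argument over the message space based on the linear congruences imposed by the unused MDS parities (showing $\left\lvert B_j\cap C_j\right\rvert\leq\left\lvert B_j\cap\overline{C_j}\right\rvert$), then ratchets the zero upward: if no message has list size exactly $l^{\star}$, none has list size greater than $l^{\star}$. Together these give $L_{max}(k)\leq L_{max}(k_1)$ for all $k$ in the sequence. Note in particular that the restriction to a subsequence has nothing to do with number-theoretic interaction between $\lceil k/\ell\rceil$ and the field size; it arises solely because a vanishing upper bound on $Pr(L(k)=l^{\star})$ guarantees only a non-increasing subsequence, not monotonicity in $k$. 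Your closing remark on why the first-moment bound of Theorem~\ref{thm:1} cannot force $L_{max}=1$ with logarithmic redundancy is correct, but it does not substitute for the missing core argument.
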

Theorem~\ref{thm:2} says that there exists an infinite sequence of GC codes whose $L_{max}$ is upper bounded by a constant. The restriction to a sequence of codes is a limitation of the proof technique. We conjecture that this result is true for all GC codes with arbitrary $k$. In Section~\ref{sec:5}, we provide numerical simulations on the maximum list size, where we gradually increase the message length from $k=32$ to $k=1024$ for multiple values of $\delta$. In the obtained empirical results, the value of the maximum list size does not increase with $k$.

\subsection{Discussion for the case of $\ell=\log k$} 
Recall that the redundancy of GC codes is $n-k=c(\delta+1)\ell$. Let $\ell=\log k$ be the chunking length used for encoding. In this case, the redundancy is $c(\delta+1)\log k$, i.e., logarithmic in $k$. It follows from Theorems~\ref{thm:1}~and~\ref{thm:2} that a logarithmic redundancy is sufficient for GC codes so that~(i)~\mbox{$\lim_{k\to \infty} L_{av} = 1$}; and (ii)~\mbox{$L_{max}=\mathcal{O}(1)$} for an infinite sequence of GC codes. It is easy to verify that a logarithmic redundancy corresponds to a code rate $R=\frac{k}{n}$ that is asymptotically optimal in $n$ (rate approaches one as $n$ goes to infinity). Therefore, GC codes can achieve the list decoding properties given in Theorems~\ref{thm:1}~and~\ref{thm:2} with a logarithmic redundancy and an asymptotically optimal code rate.

\section{Simulations}
In this section we present simulation results on the average and maximum list size obtained by GC codes. We also compare the list decoding performance of GC codes to that of the codes presented in~\cite{A17}.
\label{sec:5}
\subsection{Simulation results for GC codes}
We performed numerical simulations on the average list size $L_{av}$ and the maximum list size $L_{max}$ for $k~=32,64,128,256,512$ and $1024$ bits, and for $\delta=1, 2$ and~$3$ deletions. The empirical results are shown in~Table~II.
\begin{table}[htb]
\centering
\setlength\extrarowheight{1.2pt}
 \begin{tabular}{|c|c|c|c|c|c|c|c|c|}
\hline
\multirow{2}{*}{Config.} & \multicolumn{6}{c|}{$\delta$} \\ \cline{2-7}
& \multicolumn{2}{c|}{$1$}  & \multicolumn{2}{c|}{$2$} &  \multicolumn{2}{c|}{$3$} \\ \hline
$k$ & $L_{av}$ & $L_{max}$ & $L_{av}$ & $L_{max}$ &  $L_{av}$ & $L_{max}$ \\ \hline
32 & $1.0183$ & $2$ & $1.0151$ & $3$ & $1.0061$ &$3$ \\ \hline
64 & $1.0110$ & $2$ & $1.0061$ & $3$ & $1.0043$ &$3$ \\ \hline
128 & $1.0069$ & $2$ & $1.0029$ & $2$ & $1.0020$ &$2$ \\ \hline
256 & $1.0035$ & $2$ & $1.0020$ & $2$ & $1.0007$ &$2$ \\ \hline
512 & $1.0021$ & $2$ & $1.0010$ &$2$ & $1.0005$ &$2$  \\ \hline
1024 & $1.0007$ & $2$ & $1.0005$ &$2$ & $1.0002$ &$2$  \\ \hline
\end{tabular}
\captionsetup{font=footnotesize}
\caption{\small{The table shows the average list size $L_{av}$ and the maximum list size $L_{max}$ obtained by the GC decoder for different message lengths $k$ and different number of deletions $\delta$. The results shown are for $c=\delta+1$ and $\ell=\log k$. The results of $L_{av}$ and $L_{max}$ were recorded over $10000$ runs of simulations. In each run, a message $\mathbf{u}$ chosen uniformly at random is encoded into the codeword $\mathbf{x}$. $\delta$ bits are then deleted from $\mathbf{x}$ based on a uniformly distributed deletion pattern $\mathbf{d}$, and the resulting string is decoded.
}}
\label{ttt}
\end{table}
\FloatBarrier
The results show that: (i)~the average list size $L_{av}$ is very close to one and its value approaches one further as $k$ increases; and (ii)~the maximum list size $L_{max}$ recorded is $3$ and $L_{max}$ does not increase with $k$, for $k=32,64,\ldots, 1024$. 
Note that the redundancy used for these simulations is \mbox{$n-k=c(\delta+1)\log k$} with $c=\delta+1$. This redundancy is much smaller than the one suggested by Theorems~\ref{thm:1}~and~\ref{thm:2}. This is due to the fact that GC codes perform better than what the theoretical bounds indicate, which was discussed in~\cite{GC}.

\subsection{Comparison with the codes in~\cite{A17}}
 In~\cite{A17}, a list decoder of $\delta$ deletions was presented that is based on VT codes~\cite{VT65}. Recall that VT codes can uniquely decode a single deletion. Consider a codeword that is affected by $\delta$ deletions, resulting in a string $\mathbf{y}$ of length $n-\delta$ bits.  The main idea in~\cite{A17} is to first generate all the supersequences of $\mathbf{y}$ of length $n-1$; and then apply the VT decoder on each supersequence. The decoding results in a list whose maximum size is theoretically $\mathcal{O}(n^{\delta-1})$. Note that this size increases polynomially in $n$ for a constant $\delta$. We simulated the maximum list size of the list decoder in~\cite{A17}, and compared it to that of GC codes for $\delta=2$. The obtained results are shown in Fig.~\ref{fig::2}. The comparison shows that the maximum list size in~\cite{A17} is larger and increases with the message length $k$. Note that the two compared codes have the same order of redundancy (logarithmic in $n$) and the same order of decoding complexity (polynomial in $n$ for constant $\delta$).
\begin{figure}[h!]
\hspace{-0.4cm}
\centering
\includegraphics[width=0.45\textwidth]{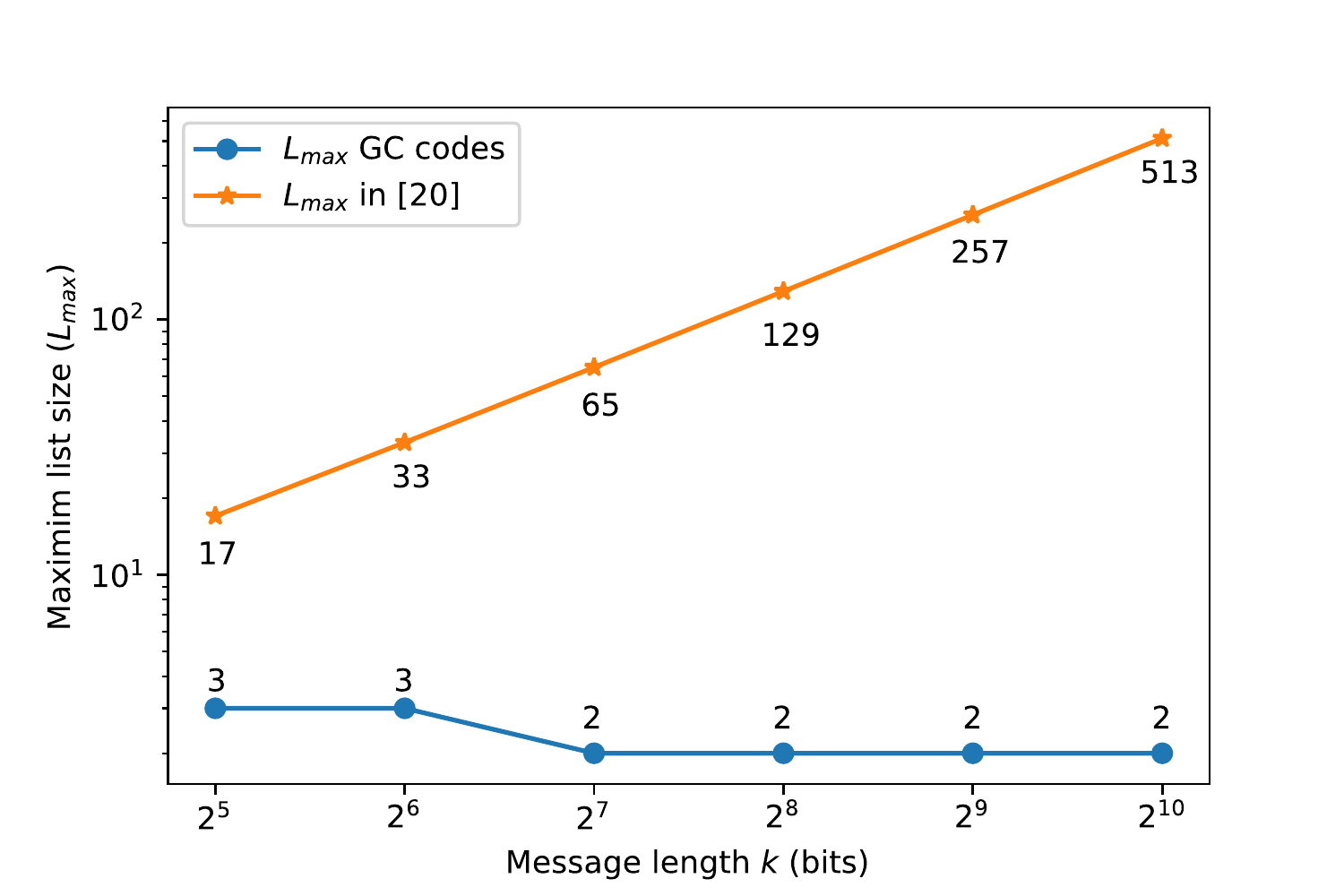}
\caption{The figure shows the maximum list size $L_{max}$ obtained by GC codes and the codes in~\cite{A17} for $\delta=2$ deletions and different message lengths. The GC code parameters are set to $c=\delta+1$ and $\ell=\log k$. The results are obtained over 10000 runs of simulations. In each run, the message and the deletion pattern are chosen independently and uniformly at random.}
\label{fig::2}
\end{figure}

\section{Proofs} 
\label{sec:4} 
\subsection{Proof of Theorem~\ref{thm:1}}
For brevity, we use $L$ instead of $L(\mathbf{u})$ throughout the proof. As previously mentioned, if $L\geq 2$, then we say that the decoder failed to decode uniquely. The probability of failure for a uniform iid binary message $\mathbf{u}$ of length $k$ bits, and any deletion pattern $\mathbf{d}=(d_1,\ldots,d_{\delta})$, is upper bounded by the expression given in~\eqref{eqprf}. Recall that $L\in \{1, \ldots, t\}$, where $t$ is the total number of cases checked by the GC decoder, given by~\eqref{eqt}. The lower bound on $L_{av}$ follows directly from the fact that $L\geq 1$, and hence 
\begin{equation}
\label{eq3}
L_{av}\geq 1.
\end{equation}
To upper bound $L_{aav}$, we write the following.
\begin{align}
L_{av} &= \sum_{l=1}^{\infty} Pr(L\geq l) \label{eqq7}\\
&= \sum_{l=1}^{t} Pr(L\geq l) \\
&= 1 + \sum_{l=2}^{t}  Pr(L \geq l) \label{eq5}\\
&\leq 1 + t Pr(L\geq 2) \label{eq6} \\
&= 1 + \mathcal{O} \left( \frac{k^{\delta}}{\ell^{\delta}}\cdot \frac{k^{\delta}}{\ell^{\delta}2^{\ell(c-\delta)}} \right) \label{eq8} \\
&= 1 + \mathcal{O} \left( \frac{k^{2\delta}}{\ell^{2\delta}2^{\ell(c-\delta)}} \right).
\end{align}
Equations \eqref{eqq7} to \eqref{eq5} follow since $L\in \{1, \ldots, t\}$ is a positive integer-valued random variable. \eqref{eq8}~follows from the fact that $Pr(L\geq 2)=Pr(F)$, and from~\eqref{eqt} and~\eqref{eqprf}.
\subsection{Proof of Corollary 1}
Let $\epsilon>0$, such that $\epsilon=\mathcal{O} \left( k^{2\delta}/\ell^{2\delta}2^{\ell(c-\delta)} \right)$. Based on Theorem~\ref{thm:1} we have 
\begin{equation}
\label{eq11}
1 \leq L_{av}\leq 1+\epsilon.
\end{equation}
To prove that $\lim_{k\to +\infty}L_{av}=1$, we derive conditions on the code parameters under which $\epsilon$ is guaranteed to vanish asymptotically in $k$. $\epsilon$ goes to zero as $k$ approaches infinity if the denominator in its mathematical expression converges to zero faster than the numerator. It is easy to verify that this holds when $\ell=\Omega(\log k)$. Let \mbox{$\ell=\Omega(\log k)$} be the first condition. Then, we have
\begin{equation}
\epsilon= \mathcal{O} \left( \frac{1}{k^{c-3\delta}(\log k)^{2\delta}} \right).
\end{equation}
Hence, we obtain a second condition that $c\geq 3\delta$. Therefore, for $\ell=\Omega(\log k)$ and $c\geq 3\delta$, $L_{av}$ approaches $1$ as the message length $k$ (or equivalently the block length $n$) goes to infinity.

\subsection{Sketch of the Proof of Theorem~\ref{thm:2}}
In what follows, we provide a sketch of the proof of Theorem~\ref{thm:2}. In order to express the variation of the list size with respect to the message length $k$, we use $L(k)$ instead of $L(\mathbf{u})$. $Pr(L(k)=l)$ refers to the probability that the list size is $l$ for uniform iid message of length $k$. Figure~\ref{fig22} illustrates the high-level idea of the proof of Theorem~\ref{thm:2}. The proof is based on the following two lemmas. The proofs of these lemmas are given in the appendix.
\begin{lemma}[informal]
For any fixed list size greater than one, there exists an infinitely increasing sequence of message lengths over which $Pr(L(k)=l)$ is non-increasing.
\end{lemma}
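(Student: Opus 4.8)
The plan is to fix an integer $l>1$, set $a_k := Pr(L(k)=l)$, and reduce the statement to a fact about real sequences: I will show that $(a_k)_k$ is nonnegative and tends to $0$, and that any such sequence admits an infinite non-increasing subsequence. The indices of that subsequence are then exactly the ``infinitely increasing sequence of message lengths'' asserted by the lemma.

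First I would control $a_k$ using the unique-decoding bound already available. A realized list size equal to $l\ge 2$ in particular forces $L(k)\ge 2$, so $a_k=Pr(L(k)=l)\le Pr(L(k)\ge 2)=Pr(F)$. By~\eqref{eqprf}, in the regime $\ell=\Omega(\log k)$ and $c\ge 2\delta$ assumed in Theorem~\ref{thm:2}, we have $Pr(F)=\mathcal{O}\!\left(k^{\delta}/(\ell^{\delta}2^{\ell(c-\delta)})\right)\to 0$ as $k\to\infty$; this is precisely the vanishing failure-probability regime recorded after~\eqref{eqprf}. Hence $0\le a_k\to 0$.

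Next I would extract the subsequence through the classical monotone-subsequence argument via peaks: call an index $m$ a \emph{peak} of $(a_k)$ if $a_m\ge a_j$ for every $j>m$. If there are infinitely many peaks $m_1<m_2<\cdots$, then $a_{m_1}\ge a_{m_2}\ge\cdots$ is the required non-increasing subsequence. If instead there are only finitely many peaks, then past the last one every index is strictly dominated by some later index, producing a strictly increasing subsequence $a_{k_1}<a_{k_2}<\cdots$; but every subsequence of $(a_k)$ converges to $0$, while a strictly increasing subsequence of nonnegative terms converges to its supremum, which is at least $a_{k_2}>0$, a contradiction. Therefore the peaks are infinite and the non-increasing subsequence exists. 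Here the nonnegativity together with $a_k\to 0$ is exactly what rules out the increasing alternative.

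The substance of the argument is entirely in the first step, the convergence $a_k\to 0$, which is inherited directly from the failure-probability bound~\eqref{eqprf} of~\cite{GC} under the stated parameter regime; the extraction in the second step is standard real analysis and I do not expect a genuine obstacle there. The only delicate point is that the general monotone-subsequence lemma alone would merely guarantee \emph{some} monotone subsequence, so the work lies in excluding the non-decreasing case, and it is the decay of $a_k$ that does this. I anticipate that the role of this lemma is to feed the companion lemma so that, after handling the finitely many relevant values of $l$ and passing to a common sequence of message lengths, the two results together pin $L_{max}$ to a constant, yielding Theorem~\ref{thm:2}.
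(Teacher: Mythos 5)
Your proof is correct and takes essentially the same route as the paper's: both rest entirely on the observation that $Pr(L(k)=l)\leq Pr(L(k)\geq 2)=Pr(F)\to 0$ under $\ell=\Omega(\log k)$ and $c\geq 2\delta$ via~\eqref{eqprf}, followed by an elementary extraction of a non-increasing subsequence (the paper waves at ``the definition of limits'' where you give the peaks argument, which is the more careful rendering of the same step). The one caveat worth noting is that the paper's formal version of the lemma asserts the descent property from \emph{every} sufficiently large starting index $k$ --- which is what the proof of Theorem~\ref{thm:2} actually uses, since the sequence must begin at the specific $k_1$ defining $l^{\star}=L_{max}(k_1)+1$ --- whereas your peak-based subsequence does not let you prescribe its first element; this is easily repaired (whenever $Pr(L(k)=l)>0$, convergence to zero directly supplies a later index with a smaller value), but as written your statement is marginally weaker than the form the downstream argument needs.
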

\begin{lemma} [informal]
For any fixed message length, if $Pr(L(k)=l)=0$, then $Pr(L(k)>l)=0$.
\end{lemma}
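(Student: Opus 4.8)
The plan is to recast Lemma~2 as a purely combinatorial statement about the \emph{image} of the list-size function and then prove that this image has no gaps. Fix the message length $k$. Since the message is drawn uniformly from a finite set, $Pr(L(k)=l)=0$ is equivalent to saying that no message of length $k$ attains list size $l$, and $Pr(L(k)>l)=0$ to saying that no message attains list size exceeding $l$. Hence the lemma is equivalent to the claim that the set $S_k \triangleq \{\, l : L(\mathbf{u})=l \text{ for some } \mathbf{u}\in\mathbb{F}_2^{k}\,\}$ is a contiguous integer interval $\{1,2,\ldots,L_{max}\}$. Its contrapositive --- if some message yields $L(\mathbf{u})>l$ then some message yields $L(\mathbf{u})=l$ --- is what I would establish, and by downward induction it suffices to prove a single \emph{step-down} claim: whenever $m\in S_k$ with $m\ge 2$, also $m-1\in S_k$.

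First I would make the decoder's list explicit. Because the parity bits are recovered exactly (they are protected by the $(\delta+1)$ repetition code) and the first $\delta$ MDS parities are consumed by the erasure decoding, every string the decoder retains is an MDS codeword whose full parity vector equals that of the transmitted codeword $\mathbf{x}$ and whose systematic binary image is a length-$n$ supersequence of $\mathbf{y}$ at Levenshtein distance exactly $\delta$; conversely, every such codeword is produced by the guess that aligns $\mathbf{y}$ inside it. Writing $\mathbf{s}$ for the systematic part of $\mathbf{y}$ and $P(\cdot)$ for the MDS parity map, this gives
\begin{equation*}
L(\mathbf{u},\mathbf{d}) = \bigl|\{\, \mathbf{U}'\in GF(q)^{\lceil k/\ell\rceil} : P(\mathbf{U}')=P(\mathbf{U}),\ \mathbf{s}\preceq \mathrm{bin}(\mathbf{U}')\,\}\bigr|,
\end{equation*}
the number of points of a fixed affine $GF(q)$-subspace (the coset of codewords with the correct parity) whose binary image is a supersequence of $\mathbf{s}$. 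This identity converts $L(\mathbf{u})=\max_{\mathbf{d}}L(\mathbf{u},\mathbf{d})$ into a count of points in the intersection of a coset with a deletion-ball preimage, which is the right language for engineering a step-down.

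To realize the step-down I would start from a message $\mathbf{u}$ and a worst-case pattern $\mathbf{d}$ attaining the $m$ valid codewords $\mathbf{U}^{(1)}=\mathbf{U},\mathbf{U}^{(2)},\ldots,\mathbf{U}^{(m)}$, all sharing the parity $P(\mathbf{U})$ and all having $\mathbf{s}$ as a $\delta$-subsequence. The aim is to perturb the configuration so that exactly one confusable codeword, say $\mathbf{U}^{(m)}$, ceases to be valid while $\mathbf{U}^{(1)},\ldots,\mathbf{U}^{(m-1)}$ remain valid and no other pattern pushes the count back up. Concretely I would locate a coordinate --- a systematic block where $\mathbf{U}^{(m)}$ differs from the others, or a run boundary of $\mathbf{s}$ on which only $\mathbf{U}^{(m)}$'s embedding depends --- and modify the message there so as to destroy precisely that one membership.

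The hard part is twofold, and this is where I expect the main obstacle to lie. Destroying $\mathbf{U}^{(m)}$'s validity means changing $\mathbf{u}$, which simultaneously shifts the whole coset $\{\mathbf{U}':P(\mathbf{U}')=P(\mathbf{U})\}$ and the received string $\mathbf{s}$, so one must show the perturbation can be chosen to decrement the count by \emph{exactly} one rather than by an uncontrolled amount. More delicate still is that $L(\mathbf{u})$ is a \emph{maximum} over all deletion patterns: reducing the count for the single pattern $\mathbf{d}$ is not enough, and one must certify that \emph{no} pattern applied to the new codeword produces more than $m-1$ valid candidates. I would attack the first difficulty by a discrete intermediate-value argument along a one-parameter family of configurations whose per-step change in the count is provably at most one, and the second by choosing the family to interpolate between a highly structured message for which $L(\mathbf{u})=1$ is forced and the given message, so that every intermediate integer value --- in particular $m-1$ --- is attained by $L(\mathbf{u})$ itself. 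Proving the uniform-over-patterns step bound for this family is the crux, and is consistent with the authors' remark that the overall technique forces the restriction to a subsequence of code lengths.
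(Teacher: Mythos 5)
Your reformulation of the lemma is sound: since the message space is finite and the message is uniform, $Pr(L(k)=l)=0$ is indeed equivalent to ``no message of length $k$ attains list size $l$,'' and the lemma reduces to the step-down claim that if some message attains list size $m\ge 2$ then some message attains list size $m-1$. This is exactly what the paper establishes (in contrapositive form). The gap is that you never actually prove the step-down claim: you describe a plan --- perturb a single message along a one-parameter family so that the valid-candidate count drops by exactly one, uniformly over all deletion patterns --- and you explicitly flag its crux (the per-step change bound and the uniformity over $\max_{\mathbf{d}}$) as unproven. That crux is the entire content of the lemma, and the perturbation route is genuinely problematic for the reasons you yourself list: changing $\mathbf{u}$ moves the parity coset and the received string simultaneously, and there is no evident handle on how the maximum over deletion patterns responds to such a change.

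The paper avoids perturbation entirely by counting over the whole message space rather than modifying one message. For a fixed configuration of which guesses survive, it compares the set $B\cap C$ of messages for which the $l$-th decoded string also satisfies the remaining $c-\delta$ MDS parity congruences against the set $B\cap\overline{C}$ for which it does not, and shows $\lvert B\cap C\rvert \le \lvert B\cap \overline{C}\rvert$: the extra constraint is a nontrivial system of linear congruences over $GF(q)$ with $q>2$ (it cannot be vacuous, since that would force two distinct decoding functions $\mathtt{g}_i$ to coincide on all of $\mathbb{F}_2^k$), so it cuts the set at least in half. Every message in $B\cap\overline{C}$ has list size exactly $l-1$, so if list size $l$ is attained then so is $l-1$ --- with no intermediate-value family and no need to track the effect of a message change on the deletion-pattern maximum. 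If you want to salvage your route, replace ``construct a specific perturbed message with list size $m-1$'' by ``show the set of messages achieving $l-1$ is at least as large as the set achieving $l$,'' which is what the linear structure of the MDS parities delivers essentially for free.
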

The high-level idea of the proof is the following. Consider a sufficiently large fixed message length $k_1$. By definition, we have $Pr\left(L(k)\geq L_{max}(k_1)+1\right)=0$, where $L_{max}(k_1)$ is the maximum list size of any message of length $k_1$. Based on Lemma~\ref{lemma1}, for the fixed list size $L_{max}(k_1)+1$, there exists an infinitely increasing sequence of message lengths $\{k_1,k_2,k_3,\ldots\}$, such that $Pr(L(k)=l)$ is non-increasing over this set of message lengths. Therefore, $\forall k\in \{k_1,k_2,\ldots\}, Pr(L(k)=L_{max}+1)=0$. Furthermore, based on Lemma~\ref{lemma2}, $\forall k\in \{k_1,k_2,\ldots\}, Pr(L(k)>L_{max}+1)=0$. By combining these results, $\forall k\in \{k_1,k_2,\ldots\}$ we have $Pr\left(L(k)> L_{max}(k_1)\right)=0$, and hence the maximum list size is upper bounded by the constant $L_{max}(k_1)$.

\begin{figure}[h!]
\centering
\begin{tikzpicture}
\node (a) {
\begin{tabular}[h!]{|c|cccccccc}
\\ \hline
\backslashbox{$k$}{$l$} & $1$ & $\cdots$ & $L_{max}(k_1)$ & $L_{max}(k_1)+1$ & & $L_{max}(k_1)+2$ & $\cdots$ & $\cdots$ \\ \hline
$\vdots$ \vspace{0.15cm}&  & & & \\ 
$k_1$ & $>0$ & $\cdots$ & $>0$ & $\boxed{\mathbf{0}}$ & & $\mathbf{0}$ & $\mathbf{0}$ & $\cdots$\\
$\vdots$ \vspace{-0.1cm}& &  & & \hspace{0.03cm} $\Big\downarrow$ \vspace{0.2cm} &  & & \\
$k_2$ & $\cdots$ & $\cdots$ & $\cdots$ & $\mathbf{0}$ & \hspace{-1.6cm} $\xrightarrow[]{\text{Lemma 2}}$ \hspace{-1.6cm} & $\mathbf{0}$ & $\mathbf{0}$ & $\cdots$ \\
$\vdots$ \vspace{-0.1cm}& &  & & \hspace{0.03cm} $\Big\downarrow$ \vspace{0.2cm} &  & & \\
$k_3$ & $\cdots$ & $\cdots$ & $\cdots$ & $\mathbf{0}$ & \hspace{-1.6cm} $\xrightarrow[]{\text{Lemma 2}}$ \hspace{-1.6cm} & $\mathbf{0}$ & $\mathbf{0}$ & $\cdots$ \\
$\vdots$ \vspace{0.15cm}& & & &\\
\end{tabular}};
\node (b) at (-0.55,0.15) {\scriptsize Lemma 1};
\node (c) at (-0.55,-1.35) {\scriptsize Lemma 1};

\path[->, line width = 0.15mm] (-6.6,1.8) edge node[xshift = -0.6cm, font = \scriptsize] {$\substack{\mbox{Message}   \\ \mbox{length}}$} (-6.6,0);

\path[->, line width = 0.15mm] (-5,3) edge node[yshift = 0.2cm, font = \scriptsize] {List size} (-3,3);
\end{tikzpicture}
\caption{$Pr(L(k)=l)$ for different message lengths $k$ and list sizes $l$. For the fixed list size $L_{max}(k_1)+1$, Lemma~\ref{lemma1} implies that $Pr(L(k)=l)$ is non-increasing for $k\in \{k_1,k_2,k_3,\ldots\}$ (vertical direction). For a fixed message size, Lemma~\ref{lemma2} implies that $Pr(L(k)>l)=0$ if $Pr(L(k)=l)=0$ (horizontal direction).}
\label{fig22}
\end{figure}
\vspace{-0.5cm}
\subsection{Proof of Theorem~\ref{thm:2}}
As previously mentioned, for a uniform iid message of length $k$ bits, $Pr(L(k)\geq 2)=Pr(F)$, where $Pr(F)$ is the probability that the decoder fails to decode uniquely. Therefore, from~\eqref{eqprf}, for any $l\geq 2$ we have
\begin{equation}
Pr(L(k)=l)=\mathcal{O}\left( f(k) \right),
\end{equation}
where $f(k)\triangleq k^{\delta}/\ell^{\delta}2^{\ell(c-\delta)}$. It can be easily verified that \mbox{$f(k)>0$} is a strictly decreasing function of~$k$ if \mbox{$\ell=\Omega(\log k)$} and $c\geq 2\delta$.
\addtocounter{lemma}{-2}
\begin{lemma}
\label{lemma1}
For any fixed $l\in\{2,\ldots,t\}$, $\forall k\gg 0$, $\exists k'>k$, such that
\begin{equation*}
Pr(L(k')=l)  \leq Pr(L(k)=l).
\end{equation*}   
\end{lemma}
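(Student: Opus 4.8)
The plan is to fix the list size $l\ge 2$ and to track the single sequence $a_k\triangleq Pr(L(k)=l)$ as $k$ grows, since the claim is exactly that no value of this sequence is a strict lower bound for all later values. The first ingredient is already available: for every $l\ge 2$ one has $a_k\le Pr(L(k)\ge 2)=Pr(F)$, and by~\eqref{eqprf} together with the function $f(k)=k^{\delta}/\bigl(\ell^{\delta}2^{\ell(c-\delta)}\bigr)$ introduced just above, $a_k\le C f(k)$ for some constant $C$ and all large $k$. Under $\ell=\Omega(\log k)$ and $c\ge 2\delta$ the function $f$ is strictly positive, strictly decreasing, and tends to $0$, so squeezing $0\le a_k\le C f(k)$ yields $\lim_{k\to\infty}a_k=0$. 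This limit is the engine of the whole argument.

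First I would dispose of every large $k$ at which $a_k>0$. Because $a_{k''}\le C f(k'')$ and $f(k'')\to 0$, all but finitely many of the numbers $a_{k''}$ lie below the fixed positive value $a_k$; choosing any $k'>k$ beyond that point gives $a_{k'}\le a_k$, which is the inequality demanded by the lemma. Equivalently, one may pick $k'$ so large that $C f(k')\le a_k$ and use $a_{k'}\le C f(k')$. In either form this settles the lemma whenever the list value $l$ occurs with strictly positive probability at length $k$.

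The genuinely delicate case is $a_k=0$, and it is also the only case that matters for Theorem~\ref{thm:2}: there the chain is started at a length $k_1$ with $l=L_{max}(k_1)+1$, so $a_{k_1}=0$ by the definition of the maximum. When $a_k=0$ the target inequality $a_{k'}\le a_k$ forces $a_{k'}=0$, and the upper bound $a_{k'}\le C f(k')$ is powerless because $f(k')>0$ for every $k'$. I expect this to be the main obstacle. What is needed is a purely structural statement: if the list value $l$ is \emph{unattainable} at some large length $k$, then it remains unattainable at \emph{some} larger length $k'$. This is subtle because a longer message carries more guesses (the count $t$ grows with $k$), so enlarging $k$ can in principle create new, larger list values; unattainability therefore need not persist for an arbitrary $k'>k$, and the length $k'$ must be selected with care. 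My plan is to supply such a $k'$ by a length-to-length reduction that maps every message and deletion pattern attaining list size $l$ at length $k'$ to one attaining list size at least $l$ at length $k$, and then to appeal to the companion contiguity lemma, which guarantees that the attainable list values form a gapless initial segment $\{1,\dots,L_{max}(k)\}$; a witness at $k'$ would then force a witness at $k$, contradicting $a_k=0$. The hard part is establishing that such a monotone reduction actually exists and identifying the appropriate $k'$, and it is precisely the impossibility of carrying it out for every $k'$—only for a sparse, carefully chosen set of lengths—that restricts the final statement of Theorem~\ref{thm:2} to an infinite subsequence rather than to all $k$.
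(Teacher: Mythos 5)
Your first branch --- the case $Pr(L(k)=l)>0$ --- is, in its entirety, the paper's proof of this lemma: the authors bound $Pr(L(k)=l)\leq Pr(L(k)\geq 2)=Pr(F)=\mathcal{O}(f(k))$ with $f(k)=k^{\delta}/\ell^{\delta}2^{\ell(c-\delta)}$ strictly decreasing to zero under $\ell=\Omega(\log k)$ and $c\geq 2\delta$, conclude $\lim_{k\to\infty}Pr(L(k)=l)=0$, and then invoke ``the definition of limits'' to produce $k'>k$ with $Pr(L(k')=l)\leq Pr(L(k)=l)$. On that branch you have matched the paper exactly, and your argument there is complete.

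The substantive content of your proposal is the second branch, and there lies the gap. You are right that when $Pr(L(k)=l)=0$ the limit argument yields nothing: a nonnegative sequence can converge to zero while being strictly positive at every index beyond $k$ (e.g.\ $a_k=0$, $a_{k'}=1/k'$ for $k'>k$), so no $k'$ with $Pr(L(k')=l)\leq 0$ is guaranteed. You are also right that this is precisely the case in which the lemma is actually used in the proof of Theorem~2, where it is applied at $l^{\star}=L_{max}(k_1)+1$ and $Pr(L(k_1)=l^{\star})=0$ by definition. However, your proposed repair --- a length-to-length reduction mapping any message and deletion pattern attaining list size $l$ at length $k'$ to one attaining list size at least $l$ at length $k$, combined with the contiguity of attainable list sizes from Lemma~2 --- is only announced, not constructed; you explicitly defer the existence of the reduction and the choice of $k'$, which is the entire difficulty (the number of blocks, the number of guesses $t$, and the MDS code all change with $k$, so there is no obvious embedding of the length-$k'$ decoding problem into the length-$k$ one). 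As a standalone proof of the lemma as stated, your proposal is therefore unfinished. It is worth noting that the paper's own proof consists solely of your first branch and never addresses the zero-probability case either, so the step you leave open is also left open by the authors; your diagnosis of where the argument is thin is accurate, but neither you nor the paper closes it.
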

\begin{proof}
See Appendix A.
\end{proof}
\begin{lemma}
\label{lemma2}
For any fixed $k$, $\forall l\in\{2,\ldots,t\}$, if $Pr(L(k)=l)=0$, then $Pr(L(k)>l)=0$.
\end{lemma}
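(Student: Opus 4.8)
The plan is to prove the contrapositive: if $Pr(L(k)>l)>0$ then $Pr(L(k)=l)>0$. Since the message is uniform over $\mathbb{F}_2^k$, every message carries positive probability, so this is the purely combinatorial statement that whenever \emph{some} message of length $k$ attains a list size strictly larger than $l$, \emph{some} (possibly different) message of length $k$ attains list size exactly $l$. Equivalently, I would show that the set of attainable values $\{L(\mathbf{u}):\mathbf{u}\in\mathbb{F}_2^k\}$ is an initial segment $\{1,2,\ldots,L_{max}\}$ of the positive integers, i.e. it contains no gaps. Because the true codeword is always a valid guess (Proposition~1), the value $1$ is attained (by any message for which only the correct guess survives), so the lemma reduces exactly to ruling out a gap above any attained value.

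To rule out gaps I would establish a \emph{decrement} property: from any instance $(\mathbf{u},\mathbf{d})$ whose valid list $\{\mathbf{z}_1,\ldots,\mathbf{z}_m\}$ has size $m\geq 2$, construct another message $\mathbf{u}'$ of the \emph{same} length $k$ with $L(\mathbf{u}')=m-1$. Iterating the decrement from a message attaining $L_{max}$ down to a message of list size $1$ then realizes every intermediate value, which is precisely the no-gap property. First I would make the list concrete using the decoding steps: a valid guess is a length-$n$ codeword $\mathbf{z}$ that is consistent with all $c$ recovered parities and whose Levenshtein distance from $\mathbf{y}$ is exactly $\delta$, and all members of the list share the common subsequence $\mathbf{y}$.

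The decrement itself I would implement as a length-preserving perturbation of the codeword that invalidates exactly one surplus candidate, say $\mathbf{z}_m$, while leaving $\mathbf{z}_1,\ldots,\mathbf{z}_{m-1}$ valid, followed by a re-selection of the deletion pattern $\mathbf{d}'$ that reproduces the modified received string. The natural mechanism is to modify the codeword inside a block (or a run) on which $\mathbf{z}_m$ differs from the remaining candidates, so that either its distance-$\delta$ condition or one of its $c-\delta$ parity checks is broken while the analogous checks for the other candidates are untouched. Keeping $k$ fixed I would handle by compensating the perturbation within a redundant run, so that the message length is unchanged.

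The main obstacle is precisely the guarantee of removing \emph{exactly} one candidate. The parity constraints are global, so a single modified symbol enters all $c$ parity equations at once and can in principle flip the validity of several candidates simultaneously, while the shared subsequence $\mathbf{y}$ couples the candidates through their common deletions. Controlling this requires showing that the $m$ distinct MDS codewords can be separated on a coordinate (or short region) that $\mathbf{z}_m$ does not share with any other list member, and that a perturbation confined to such a ``private'' position flips only the targeted candidate's checks; establishing the existence of such a separating position, and confirming that the re-chosen deletion pattern creates no new candidate, is where the real work lies. The remaining bookkeeping is routine once the separation is in place.
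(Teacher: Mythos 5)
Your reformulation is sound: since $\mathbf{u}$ is uniform over $\mathbb{F}_2^k$, the lemma is exactly the statement that the set of attained list sizes has no gaps, and a decrement property would suffice. However, the decrement construction is the entire content of the lemma, and your proposal defers it rather than supplies it. Concretely, the step ``perturb the codeword so that exactly one surplus candidate $\mathbf{z}_m$ is invalidated'' faces obstacles that you name but do not resolve, and they are not routine bookkeeping. Each of the $c-\delta$ checking parities is a linear combination of \emph{all} $\lceil k/\ell\rceil$ systematic symbols, so any perturbation of the message changes the target vector $\mathbf{b}(\mathbf{u})$ against which \emph{every} candidate is checked; there is no reason a ``private'' coordinate of $\mathbf{z}_m$ exists on which the other $m-1$ candidates' checks are unaffected. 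Moreover, changing the codeword changes the received string $\mathbf{y}$ and hence the entire initial list produced by the $t$ guesses, so previously invalid guesses may become valid; and since $L(\mathbf{u})$ is a maximum over deletion patterns, you would additionally have to show that no \emph{other} pattern applied to $\mathbf{u}'$ yields a list larger than $m-1$. Without an existence proof for the separating position and control of these side effects, the argument does not go through.

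The paper avoids constructing any specific message. It fixes the deletion pattern, writes the $i$-th guess as a deterministic map $\mathtt{g}_i:\mathbb{F}_2^k\to\mathbb{F}_2^n$, and partitions messages according to which subset of guesses survives the checks. The event $\{L=l\}$ is a disjoint union of sets $B\cap C$, where $B$ pins down the status of all relevant guesses but one and $C$ requires the remaining guess $\mathtt{g}_{i_l}$ to satisfy $\mathbf{A}^T\mathtt{g}_{i_l}(\mathbf{u})\equiv\mathbf{b}(\mathbf{u})$. The key counting claim $\lvert B\cap C\rvert\le\lvert B\cap\overline{C}\rvert$ holds because this extra congruence is either unsatisfiable on $B$, cuts $B$ at least in half, or holds on all of $B$ --- and the last case would force $\mathtt{g}_{i_l}=\mathtt{g}_{i_j}$ for some $j\neq l$, contradicting the distinctness of the decoding functions. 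Since every $\mathbf{u}\in B\cap\overline{C}$ has list size $l-1$, the vanishing of $Pr(L=l-1)$ forces $Pr(L=l)=0$, and induction finishes. If you want to salvage your route, note that you only need the one-sided inequality ``within each configuration, at least as many messages achieve list size $l-1$ as achieve $l$,'' which is far weaker than an exact one-candidate surgery and is exactly what the linear-algebraic structure of the parity checks delivers.
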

\begin{proof}
See Appendix B.
\end{proof}
\noindent Consider a sufficiently large fixed message length $k_1>0$. Lemma~\ref{lemma1} implies that, for any fixed $l\geq 2$, there exists an infinitely increasing sequence \mbox{$(k_1,k_2,\ldots)$}, such that $Pr(L(k)=l)$ is decreasing in $k$, for $k\in S\triangleq \{k_1, k_2, \ldots \}$. Let $S$ be the sequence associated with the fixed list size \mbox{$l^{\star} \triangleq L_{max}(k_1)+1$}, where $L_{max}(k_1)$ is the maximum decoding list size for any message of length~$k_1$~bits. By definition, we have
\begin{equation}
\label{eq15}
Pr\left(L(k_1)\geq l^{\star}\right)=0.
\end{equation}
Consider the set $S-\{k_1\}$, $\forall k\in S-\{k_1\}$ we have
\begin{align}
Pr\left(L(k)= l^{\star}\right) &\leq Pr\left(L(k_1)= l^{\star}\right) \label{eqq1}\\
&= 0. \label{eqq2}
\end{align}
\eqref{eqq1} follows from the fact that $Pr(L(k)=l^{\star})$ is non-increasing for $k\in S$. \eqref{eqq2} follows from~\eqref{eq15}. From~\eqref{eq15}~and~\eqref{eqq2}, we can deduce that $\forall k\in S$,
\begin{equation}
\label{eqq5}
Pr\left(L(k)= l^{\star}\right)=0.
\end{equation}
Furthermore, from Lemma~\ref{lemma2} we have, $\forall k\in S$,
\begin{align}
Pr\left(L(k)> l^{\star}\right) =0. \label{eqq4}
\end{align}
By combining the results from~\eqref{eqq5}~and~\eqref{eqq4} we get that \mbox{$\forall k\in S$},
\begin{equation}
\label{eqq6}
Pr\left(L(k)\geq l^{\star}\right)=0.
\end{equation}
Hence, $\forall k\in S$,
\begin{equation}
\label{eq17}
L_{max}(k)<l^{\star}.
\end{equation}
This proves that there exists an infinite sequence of GC codes, indexed by $S$, whose maximum list size $L_{max}$ is upper bounded by a constant\footnote{$l^{\star}=L_{max}(k_1)+1$ is a constant that does not increase with $k$.}. 
\begin{remark}
The result of Theorem~\ref{thm:2} can be used to improve the upper bound in Theorem~\ref{thm:1}. Namely, based on Theorem~\ref{thm:2}, in~\eqref{eq6} we can upper bound the list size by a constant $\mathcal{O}(1)$ instead of the total number of guesses $t$ given by~\eqref{eqt}. However, in that case, the result on $L_{av}$ would be restricted to a sequence of GC codes, whereas Theorem~\ref{thm:1} is a universal result for all GC codes.
\end{remark}

\appendices
\setcounter{lemma}{0}
\vspace{-0.5cm}
\section{Proof of Lemma~\ref{lemma1}} \label{app:A}
\begin{lemma}
For any fixed $l\in\{2,\ldots,t\}$, $\forall k\gg 0$, $\exists k'>k$, such that
\begin{equation*}
Pr(L(k')=l)  \leq Pr(L(k)=l).
\end{equation*}   
\end{lemma}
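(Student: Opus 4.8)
The plan is to forget the combinatorics and argue purely about the real sequence $a_k \triangleq Pr(L(k)=l)$. The only input I need is the bound established just above the lemma: for every fixed $l\ge 2$,
\[
0 \le a_k \le Pr(F) = \mathcal{O}\!\left(f(k)\right),\qquad f(k)\triangleq \frac{k^{\delta}}{\ell^{\delta}2^{\ell(c-\delta)}},
\]
together with the fact that under $\ell=\Omega(\log k)$ and $c\ge 2\delta$ the quantity $f(k)$ is strictly positive and strictly decreasing, so $f(k)\to 0$ and hence $a_k\to 0$. The lemma then reduces to a statement about a nonnegative null sequence $(a_k)$: beyond some index, every term is weakly dominated by some later term.

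First I would settle the generic regime $a_k>0$. Since $a_{k'}\le \mathcal{O}(f(k'))\to 0$ while $a_k$ is a fixed positive number, the tail values eventually fall below $a_k$; any such $k'>k$ yields $a_{k'}\le a_k$. This step uses nothing but convergence to zero and is immediate.

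The delicate point, and the one I expect to be the real obstacle, is the boundary case $a_k=0$: here I must produce a later index whose value is again exactly $0$, and convergence alone does not suffice, since a null sequence may have an isolated zero followed only by strictly positive terms (for instance $0,1,\tfrac12,\tfrac13,\dots$). To bypass this I would not examine $a_k$ pointwise but instead split on the cardinality of the zero set $Z\triangleq\{k:a_k=0\}$. If $Z$ is infinite, then for every $k$ there is some $k'\in Z$ with $k'>k$, and $a_{k'}=0\le a_k$ proves the claim for all $k$ at once, which also disposes of the positive case. If $Z$ is finite, put $K\triangleq\max Z$; then $a_k>0$ for every $k>K$, so for all sufficiently large $k$ the awkward case $a_k=0$ simply never arises and the argument of the previous paragraph applies. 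In both alternatives the conclusion holds for all $k\gg 0$, which is exactly the lemma.

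Finally I would flag how this dichotomy meshes with the use of the lemma in the proof of Theorem~\ref{thm:2}: that argument seeds the construction at a length $k_1$ with $Pr(L(k_1)=l^{\star})=0$ and then iterates, needing each successive length to carry probability exactly $0$ rather than merely small. It is the infinite-$Z$ branch that supplies this recurrence of genuine zeros, whereas the finite-$Z$ branch yields only the weaker weak-domination beyond $\max Z$; this is the precise point at which the guarantee must be relaxed from all $k$ to an infinite sequence of $k$.
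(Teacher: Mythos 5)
Your proof is correct, and its core is the same as the paper's: $Pr(L(k)=l)\le Pr(F)=\mathcal{O}(f(k))\to 0$, so any fixed positive value of $Pr(L(k)=l)$ is eventually dominated by later terms. Where you differ is that the paper dispatches the rest with a one-line ``by the definition of limits,'' which silently treats only the case $Pr(L(k)=l)>0$; you correctly identify that the case $Pr(L(k)=l)=0$ is not settled by convergence alone (a null sequence can have an isolated zero followed only by positive terms) and you close it with the dichotomy on the zero set $Z$: if $Z$ is infinite every $k$ has a later zero, and if $Z$ is finite the awkward case is vacuous for $k>\max Z$, so either way the statement holds for all $k\gg 0$. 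This is a strict gain in rigor over the paper's argument. Your closing remark is also on target and worth keeping: the proof of Theorem~\ref{thm:2} applies the lemma at a length $k_1$ with $Pr(L(k_1)=l^{\star})=0$ and needs the chained probabilities to remain exactly zero, which only the infinite-$Z$ branch supplies; in the finite-$Z$ branch one has $k_1\in Z$, so $k_1\le\max Z$ and the lemma's ``$\forall k\gg0$'' threshold can sit strictly above $k_1$, meaning the seeding step of Theorem~\ref{thm:2} is not actually licensed by the lemma as stated and proved (by you or by the paper). That is a gap in the downstream use of the lemma rather than in your proof of it, but your analysis is what makes the gap visible.
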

\begin{proof}
\noindent Lemma~\ref{lemma1}~follows from the fact that for any fixed $l\geq 2$,
\begin{equation*}
Pr(L(k)=l)=\mathcal{O}\left( f(k) \right),
\end{equation*}
where $f(k)=k^{\delta}/\ell^{\delta}2^{\ell(c-\delta)}$ is a strictly decreasing function of~$k$ for $\ell=\Omega(\log k)$ and $c\geq 2\delta$. Therefore, for any fixed $l\geq 2$,
\begin{equation*}
\lim_{k\to +\infty} Pr(L(k)=l)=0.
\end{equation*}
Hence, by the definition of limits, $\forall k\gg 0$, $\exists k'>k$, such that
\begin{equation*}
Pr(L(k')=l)  \leq Pr(L(k)=l).
\end{equation*} 
\end{proof}

\vspace{-0.5cm}
\section{Proof of Lemma~\ref{lemma2}}
\label{app:B}
\begin{lemma}
For any fixed $k$, $\forall l\in\{2,\ldots,t\}$, if $Pr(L(k)=l)=0$, then $Pr(L(k)>l)=0$.
\end{lemma}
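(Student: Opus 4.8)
The plan is to translate the probabilistic statement into a purely combinatorial one about which list sizes are \emph{achievable}. Since the message $\mathbf{u}$ is uniform over $\mathbb{F}_2^k$, every message has positive probability $2^{-k}$, so $Pr(L(k)=l)=0$ is equivalent to ``no message of length $k$ has list size $l$,'' and $Pr(L(k)>l)=0$ means ``no message of length $k$ has list size exceeding $l$.'' Writing $A_k \triangleq \{\,L(\mathbf{u}) : \mathbf{u}\in\mathbb{F}_2^k\,\}$ for the set of achievable list sizes, the lemma asserts exactly that $A_k$ has no gaps above a missing value. I would obtain this from a single \emph{step-down} property: if $m\in A_k$ and $m\geq 2$, then $m-1\in A_k$. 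Given step-down, if $l\notin A_k$ with $l\geq 2$ while some $m>l$ lies in $A_k$, repeatedly stepping down from $m$ produces $m-1,m-2,\ldots,l$, all in $A_k$, contradicting $l\notin A_k$; hence $A_k\cap\{l+1,\ldots,t\}=\varnothing$, which is the claim.

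Next I would recast the list in a form convenient for the construction. By the Proposition and the description of the checking phase, a string lies in the final list if and only if it is a genuine codeword (consistent with all $c$ parities) whose Levenshtein distance from the received word $\mathbf{y}$ equals $\delta$; since such a codeword has length $n$ and $\mathbf{y}$ has length $n-\delta$, this is the same as saying the codeword is a length-$n$ \emph{supersequence} of $\mathbf{y}$. Thus the final list is $\mathrm{List}(\mathbf{y})=\{\,\text{codewords }\mathbf{z}:\mathbf{y}\text{ is obtained from }\mathbf{z}\text{ by }\delta\text{ deletions}\,\}$, a quantity depending on $\mathbf{y}$ alone. A useful first observation in this language is that the list is shared by all its members: if $\mathbf{y}$ has codeword-supersequences $\mathbf{z}_1,\dots,\mathbf{z}_m$, then each $\mathbf{z}_i$ is itself a codeword that deletes down to $\mathbf{y}$, so every corresponding systematic message already attains list size at least $m$.

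With this reformulation, step-down becomes a construction problem: starting from a witness $\mathbf{y}$ with exactly $m\geq 2$ codeword-supersequences $\mathbf{z}_1,\dots,\mathbf{z}_m$, I would produce a new received word $\mathbf{y}'$, arising from a legitimate (message, deletion-pattern) pair of the same length $k$, whose codeword-supersequence set has size exactly $m-1$ and which is itself the worst case over deletion patterns for its own message, so that the new message attains list size exactly $m-1$. The natural route is to locally edit $\mathbf{y}$ at a coordinate where the $\mathbf{z}_i$ disagree, destroying the supersequence relation for exactly one codeword, say $\mathbf{z}_m$, while leaving $\mathbf{z}_1,\dots,\mathbf{z}_{m-1}$ as supersequences and introducing no new ones. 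Here I would lean on the block/MDS structure of GC codes: the repetition-protected parities pin down the parity region, and the MDS property limits how many codewords can agree on a given systematic pattern, so edits can be confined to a single block and their effect on the count bounded.

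The step I expect to be the main obstacle is precisely this surgical control of the list size: guaranteeing that the edit removes \emph{exactly one} supersequence, neither merging two of the $\mathbf{z}_i$ nor admitting a fresh codeword as a supersequence of $\mathbf{y}'$, and that the resulting $\mathbf{y}'$ both realizes a $\delta$-deletion of an actual length-$n$ codeword and is the maximizing deletion pattern for its message (so $L$ of that message is $m-1$, not larger via some other pattern). A discrete intermediate-value argument over single-bit flips of the message would be tempting, but flipping one message bit alters every parity symbol and can change the list size by more than one, so the Lipschitz hypothesis such an argument needs is unavailable; this is why a direct, structure-aware construction on the received word seems necessary.
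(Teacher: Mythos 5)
Your reduction of the lemma to a ``step-down'' property (if some message attains list size $m\geq 2$ then some message attains $m-1$) is sound, and the translation from probabilities to achievable list sizes via uniformity of $\mathbf{u}$ is fine. But the proof stops exactly where it needs to start: the step-down property itself is never established. You propose to witness it by surgically editing a received word $\mathbf{y}$ with $m$ codeword-supersequences so that exactly one is destroyed, and you yourself identify the obstacle --- the edit must not merge two surviving candidates, must not admit a fresh codeword as a supersequence of $\mathbf{y}'$, must leave $\mathbf{y}'$ realizable as a $\delta$-deletion of an actual length-$n$ codeword, and must make that deletion pattern the maximizer for the new message. None of these is argued, and there is no evident way to control all of them simultaneously with a local edit; a single changed coordinate of $\mathbf{y}$ can alter the supersequence relation for many candidate codewords at once. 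As written, the proposal is a plan with its central lemma missing. A secondary issue is your characterization of the final list as \emph{all} codeword-supersequences of $\mathbf{y}$: the GC final list consists only of strings produced by one of the $t$ erasure-decoding guesses (using the first $\delta$ parities recovered from $\mathbf{y}$) that then pass the remaining $c-\delta$ checks, so the relevant object is tied to the guess structure, not to the code alone.

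The paper avoids constructing any explicit witness. It fixes the deletion pattern, writes each guess as a deterministic function $\mathtt{g}_i$ of the message, and partitions messages by which guesses pass the parity check. The key observation (Claim~1) is that requiring one additional guess $\mathtt{g}_{i_l}(\mathbf{u})$ to satisfy the check $\mathbf{A}^T\mathtt{g}_{i_l}(\mathbf{u})\equiv\mathbf{b}(\mathbf{u})$ imposes a nontrivial linear congruence over $\mathbb{F}_q$ (nontrivial because the $\mathtt{g}_i$ are distinct), which cuts the number of qualifying messages at least in half; hence $\lvert B\cap C\rvert\leq\lvert B\cap\overline{C}\rvert$, where every message in $B\cap\overline{C}$ has list size $l-1$. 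Summing over the $\binom{m}{l}$ disjoint configurations gives $Pr(L(k)=l)\leq\sum_j\lvert B_j\cap\overline{C_j}\rvert/2^k$, so emptiness at $l-1$ forces emptiness at $l$, and induction finishes the argument. This counting-over-messages step is precisely the mechanism your proposal lacks: it delivers the step-down in aggregate, exploiting the linear-algebraic structure of the parity checks, rather than demanding a single carefully engineered received word. If you want to salvage your route, you would need to either carry out the surgical construction in full (which looks hard) or switch to the counting argument.
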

\begin{proof}
Lemma~\ref{lemma2}~follows from the construction of GC codes described in Section~\ref{sec:2}. To prove Lemma~\ref{lemma2}, we first introduce some notations. Consider a message $\mathbf{u}$ of fixed length~$k$ and an arbitrarily fixed deletion pattern \mbox{$\mathbf{d}=(d_1,\ldots,d_{\delta})$}. Let \mbox{$\mathtt{Enc}: \mathbb{F}_2^k \rightarrow \mathbb{F}_2^n$} be the GC encoding function~(Fig.~\ref{fig:1}) that maps the message to the corresponding codeword. Let \mbox{$\mathtt{Del}: \mathbb{F}_2^n\rightarrow \mathbb{F}_2^{n-\delta}$} be the deletion function that maps the transmitted codeword to the received string based on the deletion pattern \mbox{$\mathbf{d}$}. Recall that the GC decoder generates~$t$~guesses where each guess corresponds to a string that is decoded based on a certain assumption on the locations of the $\delta$ deletions. Let \mbox{$\mathtt{Dec}_i: \mathbb{F}_2^{n-\delta} \rightarrow \mathbb{F}_2^n$}, be the function that decodes the received string based on the~$i^{th}$~guess, where~\mbox{$i=1,\ldots,t$}. 

Since the encoding and decoding algorithms of GC codes are deterministic; and the deletion pattern $\mathbf{d}$ is fixed, then the functions $\mathtt{Enc},\mathtt{Del},$ and $\mathtt{Dec}$ are deterministic functions of their inputs. Let $\mathtt{g}_i:\mathbb{F}_2^k \rightarrow \mathbb{F}_2^n$ be the composition of these functions, given by \mbox{$\mathtt{g}_i\triangleq \mathtt{Dec}_i \circ \mathtt{Del} \circ \mathtt{Enc}$}, for~\mbox{$i=1,\ldots,t$}. Note that $\mathcal{G}=\{\mathtt{g}_i : i=1,\ldots,t\}$ is a family of distinct deterministic functions from $\mathbb{F}_2^k$ to $\mathbb{F}_2^n$ . 

For a given message $\mathbf{u}$, let $\{i_1,\ldots,i_m\}$ be the set of indices of the functions $\mathtt{g}_i(\mathbf{u})$ that result in different decoded strings. Namely, $\forall i,j\in \{i_1,\ldots,i_m\}$, such that $i\neq j$, we have $\mathtt{g}_i(\mathbf{u})\neq \mathtt{g}_j(\mathbf{u})$. The guessing phase results in an initial list of~$m\leq t$ decoded strings given by
\begin{equation*}
\mathcal{L}_{initial}(\mathbf{u})=\{\mathtt{g}_{i_1}(\mathbf{u}),\mathtt{g}_{i_2}(\mathbf{u}),\ldots,\mathtt{g}_{i_m}(\mathbf{u})\}.
\end{equation*}

In the checking phase, each string in \mbox{$\mathcal{L}_{initial}(\mathbf{u})$} is checked with $c-\delta$ MDS parities in order to eliminate the strings which are not consistent with these parities. Let $\mathbf{A}\in \mathbb{F}_q^{n\times (c-\delta)}$ be the matrix consisting of the encoding vectors corresponding to the $c-\delta$ MDS parities. Let $\mathbf{b}(\mathbf{u})\in \mathbb{F}_q^{c-\delta}$ be the vector containing the values of the $c-\delta$ parities corresponding to the encoding of~$\mathbf{u}$. The final list of candidate strings given by
\begin{align*}
\mathcal{L}_{final}(\mathbf{u})&=\{\mathbf{x}\in \mathcal{L}_{initial}(\mathbf{u}) : \mathbf{A}^T\mathbf{x}\equiv \mathbf{b}(\mathbf{u})\Mod{q}\}.
\end{align*}
For brevity, henceforth we refer to $\mathcal{L}_{final}(\mathbf{u})$ by $\mathcal{L}$. Let $l\in\{1,\ldots,m\}$, consider the following sets 
\begin{align*}
B&\triangleq \{\mathbf{u}\in \mathbb{F}_2^k : \mathtt{g}_{i_{1}}(\mathbf{u})\in \mathcal{L}, \ldots, \mathtt{g}_{i_{l-1}}(\mathbf{u})\in \mathcal{L}, \mathtt{g}_{i_{l+1}}(\mathbf{u})\notin \mathcal{L}, \ldots, \mathtt{g}_{i_{m}}(\mathbf{u})\notin \mathcal{L} \}, \\
C&\triangleq \{\mathbf{u}\in \mathbb{F}_2^k : \mathtt{g}_{i_{l}}(\mathbf{u})\in \mathcal{L} \}, \\
\overline{C}&\triangleq \{\mathbf{u}\in \mathbb{F}_2^k : \mathtt{g}_{i_{l}}(\mathbf{u})\notin \mathcal{L} \}. 
\end{align*}
\noindent Note that the condition $\mathtt{g}_{i_{l}}(\mathbf{u})\in \mathcal{L}$ in set $C$ is equivalent to
\begin{equation}
\label{set1}
\mathbf{A}^T\mathtt{g}_{i_{l}}(\mathbf{u})\equiv \mathbf{b}(\mathbf{u})\Mod{q}. 
\end{equation}
The next claim states that $\left \lvert B \cap C\right \rvert \leq \left \lvert B \cap \overline{C} \right \rvert$, which is intuitive since the intersection with $C$ introduces an additional equality constraint which is generally more restrictive than the ``not equal" constraint of $\overline{C}$.
\begin{claim}
\label{claim1}
For any $l\in\{1,\ldots,m\}, \left \lvert B \cap C\right \rvert \leq \left \lvert B \cap \overline{C} \right \rvert.$
\end{claim}
\begin{proof}
Since $|B|=\left \lvert B \cap C\right \rvert+\left \lvert B \cap \overline{C}\right \rvert$, then it is easy to see that proving $\left \lvert B \cap C\right \rvert \leq \left \lvert B \cap \overline{C} \right \rvert$ is equivalent to proving that \mbox{$\left \lvert B \cap C\right \rvert \leq |B|/2$}. If $B$ is empty, then it is trivial that $\left \lvert B \cap C\right \rvert \leq |B|/2$. If $B$ is non-empty, then we consider the following cases:
\begin{itemize}

\item \underline{$B\cap C=\emptyset$:} Then, $\left \lvert B \cap C\right \rvert = 0 < |B|/2$, and hence the inequality is satisfied.

\item \underline{$B\cap C=B$:} Since $\mathbf{A}$ and $\mathbf{b}(\mathbf{u})$ are fixed for all functions $\mathtt{g}_{i}(\mathbf{u}), i=i_1,\ldots,i_m$, then $B\cap C=B \Rightarrow \exists j\in \{1,\ldots,l-1\}$, such that
\begin{equation}
\label{set2}
\forall \mathbf{u}\in \mathbb{F}_2^n,~\mathbf{A}^T\mathtt{g}_{i_{l}}(\mathbf{u})\equiv \mathbf{b}(\mathbf{u})\Mod{q} \Leftrightarrow \mathbf{A}^T\mathtt{g}_{i_j}(\mathbf{u})\equiv \mathbf{b}(\mathbf{u})\Mod{q}.
\end{equation}
Since $q=2^{\ell}$ is a prime power, then~\eqref{set2} implies that
\begin{equation*} 
\forall \mathbf{u}\in \mathbb{F}_2^n,~\mathtt{g}_{i_{l}}(\mathbf{u})= \mathtt{g}_{i_j}(\mathbf{u}),
\end{equation*}
with $j\neq l$, which contradicts the fact that $\mathcal{G}=\{\mathtt{g}_i : i=i_1,\ldots,i_m\}$ is a family of distinct functions. Therefore, the case of $B\cap C=B$ is not possible.

\item \underline{$B\cap C\neq \emptyset$ and $B\cap C\neq B$:}  The intersection of set $C\subset \mathbb{F}_2^{k}$ with set $B\subset \mathbb{F}_2^{k}$ can be seen as introducing an additional equality constraint to the set $B$, given by~\eqref{set1}. If $B\cap C\neq \emptyset$ and $B\cap C\neq B$, then~\eqref{set1} is satisfiable by some of the messages $\mathbf{u}\in B$. The constraint given by~\eqref{set1} corresponds to a set of linear congruences in $\mathbb{F}_q$, with $q=2^{\ell}>2$, where the variables in these congruences are the bits of the message $\mathbf{u}$. Since the messages $\mathbf{u}\in B\cap C$ must satisfy these linear congruences, then the degree of freedom of the bits of $\mathbf{u}$ is decreased by at least $1$. In other words, the size of the set $B\subset \mathbb{F}_2^{k}$ is decreased by at least a factor of $2$, i.e., $\left \lvert B \cap C\right \rvert \leq |B|/2$.
\end{itemize}
\end{proof}
Recall that $L(k)=L(\mathbf{u})=\left \lvert \mathcal{L}(\mathbf{u})\right \rvert$ is the final list size defined in~\eqref{def1}, and $Pr(L(k)=l)$ refers to the probability that the list size is exactly $l$ for a uniform iid message $\mathbf{u}$. Next, we prove that if $Pr(L(k)=l-1)=0$, then $Pr(L(k)>l-1)=0$. Note that proving this statement is equivalent to proving Lemma~\ref{lemma2}.

The event $\{L(k)=l\}$ can result from $\binom{m}{l}$ cases, where each case corresponds a certain combination of $l$ out of the $m$ functions $\mathtt{g}_{i}(\mathbf{u}), i=i_1,\ldots,i_m$, satisfying equality constraints of the form of~\eqref{set1}, whereas the remaining $m-l$ functions do not satisfy these constraints. The set $B \cap C$ represents one of these $\binom{m}{l}$ cases which lead to $\{L(k)=l\}$. We index these cases by $j=1,\ldots,\binom{m}{l}$, and refer to each case by its corresponding set $B_j \cap C_j$. The sets $B_j \cap C_j, j=1,\ldots,\binom{m}{l}$, are mutually disjoint because the conditions on $\mathbf{u}$ that define these sets are contradictory. Furthermore, it is easy to see that the result of Claim~1 applies to all of the $\binom{m}{l}$ cases, i.e., $\forall j\in \{1,\ldots,\binom{m}{l}\}$, we have $\left \lvert B_j \cap C_j \right \rvert\leq \left \lvert B_j \cap \overline{C_j} \right \rvert$. Therefore,
\begin{align}
Pr(L(k)=l) &= Pr\left(\bigcup_{j=1}^{\binom{m}{l}} \{ \mathbf{u}\in B_j \cap C_j \}\right) \label{seteq1} \\
&= \sum_{j=1}^{\binom{m}{l}} Pr\left( \mathbf{u}\in B_j \cap C_j \right) \label{seteq2} \\
&= \sum_{j=1}^{\binom{m}{l}} \frac{\left \lvert B_j \cap C_j \right \rvert}{2^k} \label{seteq3}\\
&\leq \sum_{j=1}^{\binom{m}{l}} \frac{\left \lvert B_j \cap \overline{C_j} \right \rvert}{2^k} \label{seteq4}. 
\end{align}
\eqref{seteq2} follows from the fact that the $\binom{m}{l}$ cases correspond to mutually disjoint sets. \eqref{seteq3} follows from the fact the message is uniform iid. \eqref{seteq4} follows from Claim 1. Note that $\forall \mathbf{u}\in B_j \cap \overline{C_j}$, we have $\left \lvert \mathcal{L}(\mathbf{u}) \right \rvert = l-1$. Hence, if $Pr(L(k)=l-1)=0$, then $\forall j\in \{1,\ldots,\binom{m}{l}\}$, $\left \lvert B_j \cap \overline{C_j} \right \rvert = 0$, and hence from~\eqref{seteq4} we get $Pr(L(k)=l)\leq 0$. This proves that if $Pr(L(k)=l-1)=0$, then $Pr(L(k)=l)=0$. Similarly, $\forall l'>l$, we can apply the same approach to get $Pr(L(k)=l')=0$. Therefore, if $Pr(L(k)=l-1)=0$, then $Pr(L(k)>l-1)=0$.
\end{proof}

\bibliographystyle{ieeetr}

\bibliography{Refs}

\end{document}